\documentclass[letterpaper, 10pt, conference]{ieeeconf}

\IEEEoverridecommandlockouts 


\author{Evagoras Makridis$^1$, Gabriele Oliva$^2$, Kasagatta Ramesh Narahari$^3$, Mohammadreza Doostmohammadian$^4$,\\Usman A. Khan$^5$, and Themistoklis Charalambous$^{1,6,7}$
\thanks{$^1$Department of Electrical and Computer Engineering, School of Engineering, University of Cyprus, Nicosia, Cyprus. E-mails: {\footnotesize\{makridis.evagoras, charalambous.themistoklis\}@ucy.ac.cy}.}
\thanks{$^2$Department of Engineering, University Campus Bio-Medico of Rome, Roma, Italy. E-mail: g.oliva@unicampus.it.}
\thanks{$^3$Nokia, Espoo, Finland. E-mail: naraharikasagattaramesh@gmail.com.}
\thanks{$^4$School of Engineering, Semnan University, Semnan, Iran. E-mail: doost@semnan.ac.ir.}
\thanks{$^5$Department of Electrical and Computer Engineering, Tufts University, Medford, Massachusetts, United States. E-mail: khan@ece.tufts.edu.}
\thanks{$^6$Department of Electrical Engineering and Automation, School of Electrical Engineering, Aalto University, Espoo, Finland.}
\thanks{$^7$FinEst Centre for Smart Cities, Tallinn, Estonia.}
\thanks{The work of E. Makridis and T. Charalambous was partly supported by the European Research Council (ERC) Consolidator Grant MINERVA (Grant agreement No. 101044629).}
}

\usepackage{cite}
\usepackage{soul}
\usepackage{pifont}
\usepackage[usenames,dvipsnames]{xcolor}
\usepackage[utf8]{inputenc} 
\usepackage{url} 
\usepackage{booktabs} 
\usepackage{amsfonts,amsmath,amssymb}       
\usepackage{amsthm} 
\usepackage{nicefrac} 
\usepackage{microtype} 
\usepackage{graphicx}
\usepackage{dsfont}
\usepackage{enumitem}
\usepackage{tikz} 
\usepackage{tkz-berge}
\usepackage{pgfplots}
\usepackage{baskervillef}
\usepackage{multicol}
\usepackage{breqn}
\usepackage{tabularx}
\pgfplotsset{compat=newest}
\usetikzlibrary{arrows, chains, positioning, shapes.geometric, shapes.symbols, automata, petri, topaths,plotmarks,arrows.meta,spy,backgrounds,pgfplots.groupplots}
\usepgfplotslibrary{colorbrewer}
\usepgfplotslibrary{fillbetween}
\usepackage{hyperref}       
\hypersetup{
    colorlinks,
    linkcolor=Peach,
    citecolor=MidnightBlue,
}

\newcommand{\cmark}{\textcolor{YellowGreen}{\ding{51}}}
\newcommand{\xmark}{\textcolor{WildStrawberry}{--}}
\newcommand{\vect}[1]{\mathbf{#1}} 
\newcommand{\grvect}[1]{\boldsymbol{#1}} 
\newcommand{\set}[1]{\mathcal{#1}} 
\newcommand{\ie}{\textit{i.e.,~}} 
\newcommand{\etal}{\textit{et.~al.~}} 
\newcommand{\inneighbor}[1]{\set{N}_{#1}^{\scriptsize\texttt{in}}}
\newcommand{\inneighborS}[1]{\set{\tilde{N}}_{#1}^{\scriptsize\texttt{in}}}
\newcommand{\outneighbor}[1]{\set{N}_{#1}^{\scriptsize\texttt{out}}}
\newcommand{\indegree}[1]{d_{#1}^{\scriptsize\texttt{in}}}
\newcommand{\outdegree}[1]{d_{#1}^{\scriptsize\texttt{out}}}

\usepackage[font=footnotesize,skip=2pt]{caption}

\setlength{\abovedisplayskip}{3pt}
\setlength{\belowdisplayskip}{2pt}
\setlength{\abovedisplayshortskip}{0pt}
\setlength{\belowdisplayshortskip}{0pt}

\DeclareSymbolFont{matha}{OML}{txmi}{m}{it}
\DeclareMathSymbol{\varv}{\mathord}{matha}{118}

\newtheorem{thm}{Theorem}[]
\newtheorem{lem}{Lemma}[]

\theoremstyle{definition}

\newcommand\Label[1]{&\text{\refstepcounter{equation}(\theequation)\ltx@label{#1}}&}

\definecolor{qual1}{HTML}{7FC97F} 
\definecolor{qual2}{HTML}{BEAED4} 
\definecolor{qual3}{HTML}{FDAE61} 
\definecolor{qual4}{HTML}{4575B4} 
\definecolor{qual5}{HTML}{D73027} 



\title{\LARGE \bf Distributed Optimization with Gradient Tracking over\\Heterogeneous Delay-Prone Directed Networks}

\begin{document}
\maketitle

\begin{abstract}
In this paper, we address the distributed optimization problem over unidirectional networks with possibly time-invariant heterogeneous bounded transmission delays. In particular, we propose a modified version of the Accelerated Distributed Directed OPTimization (ADD-OPT) algorithm, herein called \emph{Robustified ADD-OPT} (R-ADD-OPT), which is able to solve the distributed optimization problem, even when the communication links suffer from heterogeneous but bounded transmission delays. We show that if the gradient step-size of the R-ADD-OPT algorithm is within a certain range, which also depends on the maximum time delay in the network, then the nodes are guaranteed to converge to the optimal solution of the distributed optimization problem. The range of the gradient step-size that guarantees convergence can be computed \emph{a~priori} based on the maximum time delay in the network.
\end{abstract}

\begin{keywords}
distributed optimization, bounded delays, gradient tracking, unidirectional networks, directed graphs.
\end{keywords}

\section{Introduction}\label{sec:introduction}

Recently, there has been an increasing number of multi-agent optimization problems where a group of agents aims at optimizing the sum of their objective functions, by allowing local information exchange over an underlying communication network. Such formulation arises in cases where  each agent has a local objective and the objective is to reach a compromise without resorting to central coordination. Such a distributed formulation is extensively used to solve multi-node optimization problems for several large-scale applications, including distributed energy resource management \cite{9366810,8825499}, resource allocation in data centers~\cite{Grammenos:2023}, large-scale machine learning \cite{tsianos2012consensus,boyd2011distributed}, and distributed localization in sensor networks \cite{khan2009diland,safavi2018distributed}.

In particular, in a multi-agent system comprised of a set of $n$ agents communicating over a network, each agent $j$, represented by a node in a graph, has access only to its own local objective function $f_{j}: \mathbb{R} \rightarrow \mathbb{R}$, and aims at minimizing the sum of objectives $\sum_{j=1}^n f_{j}(z)$, where the common decision variable\footnote{To simplify the presentation and analysis, in this work, we use a scalar decision variable. However, the results presented herein can be readily extended to vector variables.} $z\in\mathbb{R}$ is computed via information exchange and agreement with its neighbors. More specifically, the agents aim at solving cooperatively the following optimization problem:%
\begin{equation}\label{eq:optimization_problem}
    \min_{z} \frac{1}{n} \sum_{j=1}^{n} f_{j}(z).
\end{equation}

Over the past decade, considerable research efforts have been dedicated in consensus-based distributed optimization problems over networks with bidirectional links, forming undirected graphs. Such contributions have been extensively studied and presented in \cite{qu2017harnessing,yuan2016convergence,shi2015extra}. These works exploit the symmetry of bidirectional communication structure to enable the construction of doubly-stochastic consensus weight matrices, so that local decision variables of networked agents collaboratively converge towards the solution of the distributed optimization problem in \eqref{eq:optimization_problem}.

While initial efforts yielded valuable insights and algorithms, attention has shifted to more realistic scenarios with unidirectional network links, where the information might flow in only one direction. This attribute is prevalent in wireless communication due to varying transmission power and interference levels. Such (possibly asymmetric) networks are represented as directed graphs (digraphs), where directed edges represent unidirectional communication links. To address the distributed optimization problem over unidirectional links, there have been different approaches that are based on the following distributed average consensus methods: (a) the distributed weight balancing \cite{dominguez2013distributed} where agents adjust the weights on their outgoing links to asymptotically reach weights that form a doubly-stochastic matrix, (b) the surplus-based approach \cite{cai2012average} where the asymmetry of the digraph is mitigated by augmenting the state of each agent with an additional variable (often called \emph{surplus}) that locally tracks individual state updates through the assignment of weights on both the incoming and outgoing links that form both row-stochastic and column-stochastic matrices, respectively, and (c) ratio consensus \cite{dominguez2010coordination} where agents run an extra concurrent iteration that asymptotically learns the right eigenvector of the column-stochastic matrix that is formed by having agents assign weights on their outgoing links. For more information and comparison between these distributed average consensus methods over digraphs, the interested reader is referred to \cite{hadjicostis2018distributed}.

In the context of distributed optimization over reliable and time-invariant digraphs, Xi \etal introduced two algorithms based on the ratio consensus protocol: the Directed EXact firsT-ordeR Algorithm (DEXTRA) \cite{xi2017dextra} and ADD-OPT \cite{xi2017add} that incorporates \emph{gradient tracking} to record and correct the locally aggregated gradients at each agent, and accelerate the convergence to the optimal solution. Moreover, Nedich \etal in \cite{nedic2014distributed,nedic2017achieving} proposed the Subgradient-push and Push-DIGing algorithms for time-varying digraphs, based on the push-sum consensus algorithm. Using the surplus-consensus method, Xi \etal introduced the Directed-Distributed Projected Subgradient (D-DPS) and Directed-Distributed Gradient Descent (D-DGD) algorithm, respectively, to solve distributed optimization problems over digraphs, in \cite{xi2016distributed} and \cite{xi2017distributed}, respectively.


Although the aforementioned distributed optimization algorithms can successfully overcome the asymmetry of unidirectional networks and potentially find the optimal solution in \eqref{eq:optimization_problem}, none of them considered unreliable networks where the information flow between agents might be delayed. This factor arises inherently as the number of agents in the network grows, and hence a corresponding increase in the quantity of communication links can occur. This surge of links often results in information congestion within the network, giving rise to transmission delays that exert a significant degradation on the performance and efficacy of distributed optimization algorithms. Extensive work has been done, towards the development of mitigation strategies for addressing the challenges posed by congested networks, including the transmission delays encountered over communication links. These challenges have been analysed in the works presented in \cite{hadjicostis2013average,makridis2023utilizing}, in the context of distributed average consensus. Interestingly, in the context of distributed optimization, an early work in \cite{tsianos2011distributed} by Tsianos \etal analyzed a push-sum consensus-based distributed optimization algorithm for digraphs that can handle fixed bounded communication delays. However, this work lacks the analysis of convergence, as well as the incorporation of gradient tracking, a crucial feature known for expediting the convergence to the optimal solution \cite{augmented2015xu,song2023optimal,gradienttracking2023notarnicola}.

The aforementioned works illustrate the promising potential in the study and development of communication-aware strategies tailored to overcome the challenges posed by communication delays in the context of consensus-based distributed optimization algorithms. To the best of our knowledge, our work is the first to propose and analyze a distributed optimization algorithm with gradient tracking over directed graphs in the presence of time-invariant delays, as highlighted in Table~\ref{tab:distributed_opt_algorithms}.


\begin{table}[ht]
\centering
\caption{Gradient-based distributed optimization algorithms over directed graphs ($k$ denotes the number of algorithm iterations, and $0<\mu<1$ the convexity constant, introduced in the subsequent section).}
\renewcommand{\arraystretch}{1.5}
\noindent\begin{tabularx}{\columnwidth}{|X|c|c|c|X|}
    \cline{1-4}
    \hline
    \hline
    Algorithm  & Delays & Converg. Rate & Grad. Tracking\\
    \hline
    \hline
    D-DGD \cite{xi2017distributed}  & \xmark & $\mathcal{O}\left(\frac{\ln k}{\sqrt{k}}\right) $ & \xmark\\
    \hline
    D-DPS \cite{xi2016distributed} & \xmark & $\mathcal{O}\left(\frac{1}{\sqrt{k}}\right)$ & \xmark \\
    \hline
    S-Push \cite{nedic2014distributed} & \xmark & $\mathcal{O}\left(\frac{\ln k}{\sqrt{k}}\right)$ & \xmark\\
    \hline
    PS-DDA \cite{tsianos2011distributed} & fixed & $\mathcal{O}\left(\frac{1}{\sqrt{k}}\right)$ & \xmark\\
    \hline
    DEXTRA \cite{xi2017dextra} &  \xmark & $\mathcal{O}\left(\mu^k\right)$ & \cmark\\
    \hline
    ADD-OPT \cite{xi2017add} & \xmark & $\mathcal{O}\left(\mu^k\right)$ & \cmark\\
    \hline
    Push-DIGing \cite{nedic2017achieving} & \xmark & $\mathcal{O}\left(\mu^k\right)$ & \cmark \\
    \hline
    \textcolor{Black}{\textbf{R-ADD-OPT}} & fixed & $\mathcal{O}\left(\mu^k\right)$ &  \cmark\\
    \hline
    \hline
\end{tabularx}\label{tab:distributed_opt_algorithms}
\end{table}

In this work, we tackle the problem of distributed optimization over directed graphs, where the transmission links are prone to heterogeneous communication delays. Specifically, we consider how ADD-OPT algorithm \cite{xi2017add} can be modified in order to operate even in the presence of heterogeneous communication delays. Towards this end, we develop a framework that seamlessly integrates handling mechanisms of heterogeneous delays induced in the transmission links, by embedding the robustified Ratio Consensus protocol in the classical ADD-OPT algorithm. The proposed algorithm, herein called the Robustified ADD-OPT (R-ADD-OPT) algorithm, enables nodes that communicate over strongly connected digraphs, to converge to the optimal solution of the distributed optimization problem, even in the presence of heterogeneous delays. In our theoretical analysis, we demonstrate that the convergence of nodes' decision variables to the optimal solution remains assured, given that the gradient step-size is within a specific range. The range for which the algorithm is guaranteed to converge to the optimal solution is dependent on the maximum delay (provided it remains bounded) existing in the network. The proof of convergence is based on augmenting the network's corresponding weighted adjacency matrix, to handle bounded time-invariant delays. 
It is conjectured via simulations that the R-ADD-OPT algorithm works also for bounded time-varying delays, for a gradient step-size within a specific range, upper bounded by the step-size for the maximum time-invariant delay. However, the proof for this (or a tighter) range remains an open problem.

\section{Preliminaries}\label{sec:preliminaries}
\subsection{Notation}
In this paper, we denote vectors by lowercase bold letters, and matrices by uppercase italic letters. The $n \times n$ identity matrix is denoted by $I_n$, and the $n$ dimensional column vectors of all ones and zeros are represented by $\mathbf{1}_n$ and $\mathbf{0}_n$, respectively. The Kronecker product of two matrices $A$ and $B$ is denoted by $A \otimes B$. For any $f(\mathbf{x}), \nabla f(\mathbf{x})$ denotes the gradient of $f$ at $\mathbf{x}$. The spectral radius of a matrix $A$ is represented by $\rho(A)$. The right and left eigenvectors of an irreducible column-stochastic matrix $A$, corresponding to the eigenvalue of 1, are denoted by $\boldsymbol{\pi}$ and $\mathbf{1}_n^{\top}$, respectively, such that $\mathbf{1}_n^{\top} \boldsymbol{\pi}=1$. By $\|\cdot\|$, and depending on its argument, we denote a particular matrix norm, or a vector norm that is compatible with this particular matrix norm, \ie $\|A \mathbf{x}\| \leq \|A\|\|\mathbf{x}\|$ for all matrices, $A$, and all vectors, $\mathbf{x}$. The Euclidean norm of vectors and matrices is denoted by $\|\cdot\|_2$. 

\subsection{Network Model}
Consider $n$ agents (represented by graph's nodes) communicating over a \emph{strongly connected network}\footnote{A digraph $\set{G}$ is strongly connected if for every pair of vertices $v_i$ and $v_j$ $\in \set{V}$, $v_j$ is reachable by a directed walk from $v_i$}, $\set{G}=(\set{V}, \set{E})$, where $\set{V}=\{v_1, \cdots, v_n\}$ is the set of nodes and $\set{E} \subseteq \set{V} \times \set{V}$ is the set of edges (representing the communication links between agents). The total number of edges in the network is denoted by $m=|\set{E}|$. A directed edge $\varepsilon_{ji} \triangleq (v_j, v_i) \in \set{E}$, where $v_j, v_i \in \set{V}$, represent that node $v_j$ can receive information from node $v_i$, \ie $v_i \rightarrow v_j$. The nodes that transmit information to node $v_j$ directly are called in-neighbors of node $v_j$, and belong to the set $\inneighbor{j}=\{v_i \in \set{V} | \varepsilon_{ji} \in \set{E}\}$, with its cardinality denoted by $\indegree{j} = |\inneighbor{j}|$ called the in-degree. The nodes that receive information from node $v_j$ directly are called out-neighbors of node $v_j$, and belong to the set $\outneighbor{j}=\{v_l \in \set{V} | \varepsilon_{lj} \in \set{E}\}$, with its cardinality $\outdegree{j}= |\outneighbor{j}|$ called the out-degree. Note that self-loops are included in digraph $\mathcal{G}$ and this implies that the number of in-going links of node $v_j$ are ($\indegree{j} +1$) and similarly the number of its out-going links is ($\outdegree{j} +1$).

\subsection{Robustified Ratio Consensus over Directed Graphs}\label{subsec:robustified_ratio_consensus}
To reach an agreement on the common decision variable $z$ in \eqref{eq:optimization_problem} in a distributed way and over digraphs, one can employ the \emph{Ratio Consensus} protocol \cite{dominguez2011distributed} such that all nodes in the network converge to the network-wide average of their initial values. A protocol that handles time-varying (yet bounded) communication delays to ensure asymptotic average consensus was proposed in \cite{hadjicostis2013average}. Consider that node $v_{j}$ undergoes an \emph{a priori} unknown delay $\tau^{ji}$ bounded by a positive integer $\bar{\tau}^{ji}$, \ie $\tau^{ji} \leq \bar{\tau}^{ji}<\infty$. The maximum delay in the network is denoted by $\bar{\tau}=\max\{\bar{\tau}^{ji}\}$. The own value of node $v_j$ is always instantly available without delay, \ie $\tau^{jj}=0$. At each time step $k$, each node $v_j$ maintains a state variable $x_k^j \in \mathbb{R}$ (initialized at $x_0^j=V^j$, where $V^j$ is an arbitrary value of node $v_j$), an auxiliary scalar variable, $y_k^j \in \mathbb{R}^+$ (initialized at $y_0^j=1$), and $z_k^j \in \mathbb{R}$ set to $z_k^j=x_k^j/y_k^j$. Based on this notation, the strategy proposed in \cite{hadjicostis2013average} involves each node iteratively updating its states according to:
\begin{subequations}\label{eq:robustified_ratio_consensus}
    \begin{align}
        x^{j}_{k+1} &= p_{jj} x_{k}^{j} + \sum_{v_i \in \inneighbor{j}} p_{ji} x^{i}_{k-\tau^{ji}},\\
        y^{j}_{k+1} &= p_{jj} y_{k}^{j} + \sum_{v_i \in \inneighbor{j}} p_{ji} y^{i}_{k-\tau^{ji}},
    \end{align}
\end{subequations}
where $P=\{p_{ji}\} \in \mathbb{R}_{+}^{n \times n}$ forms a nonnegative column-stochastic matrix, since the weights $p_{lj}$ are assigned using:
\begin{align}\label{eq:weights}
    p_{lj}=\begin{cases}
    \frac{1}{1 + \outdegree{j}}, & v_l \in \outneighbor{j} \cup \{v_j\},\\
     0, & \text { otherwise}.
    \end{cases}
\end{align}
Since each node $v_j$ assigns the weights according to \eqref{eq:weights}, it is required that the nodes have the knowledge of their out-degree. 
Clearly, in the absence of delays, this strategy reduces to the Ratio Consensus algorithm in \cite{dominguez2011distributed}. By construction, matrix $P$ is primitive column-stochastic, and the variables $y_k^j$ are set to $1$. Then, the limit of the ratio $x_k^j$ over $y_k^j$, is the average of the initial values and is given by  \cite{kempe2003gossip, dominguez2011distributed}:
\begin{align}
\lim _{k \rightarrow \infty} z_k^j=\lim _{k \rightarrow \infty} \frac{x_k^j}{y_k^j}=\frac{\left( \sum_{i=1}^{n} x_k^i\right) \grvect{\pi}^{j}}{n\grvect{\pi}^{j}}=\frac{1}{n}\sum_{i=1}^{n} x_0^i.
\end{align}

\subsection{Assumptions}

Prior introducing the R-ADD-OPT algorithm, we make the following assumptions:   
\begin{enumerate}
[label=\textbf{Assumption \arabic*:},parsep=3pt,leftmargin=70pt,wide, labelindent=0pt]
    \item \textbf{(Strong connectivity)}\\The communication graph $\set{G}(\set{V},\set{E})$ is a strongly connected digraph. 
    \item \textbf{(Out-neighborhood knowledge)}\\Each node $v_j \in \set{V}$ knows its out-degree, \ie the number of out-neighboring nodes.
    \item \textbf{(Lipschitz-continuous gradients and strong convexity)}\\ Each local function $f_{i}$ is differentiable, strongly convex, and its gradient is globally Lipschitz-continuous, \ie for any $i$ and $z_{1}, z_{2} \in \mathbb{R}$ :\\
        a) there exists a positive constant $L$ such that
        $$
        \left\|\nabla f_{i}\left(z_{1}\right)-\nabla f_{i}\left(z_{2}\right)\right\| \leq L\left\|z_{1}-z_{2}\right\|,
        $$
        b) there exists a positive constant $\mu$ such that
        $$
        f_{i}\left(z_{1}\right)-f_{i}\left(z_{2}\right) \leq \nabla f_{i}\left(z_{1}\right)^{\top}\left(z_{1}-z_{2}\right)-\frac{\mu}{2}\left\|z_{1}-z_{2}\right\|^{2}.$$ 
        Notably strong convexity of the functions $f_i$ imply that \eqref{eq:optimization_problem} has a unique and bounded global optimal solution $z^{*}$.
    \item \textbf{(Time-invariant heterogeneous delays)}\\ The delay on link $\varepsilon^{ji}$ is denoted by $0\leq \tau^{ji} \leq \bar{\tau}^{ji} \leq \bar{\tau}$, where $\bar{\tau}^{ji}$ is the maximum delay on link $\varepsilon^{ji}$, and $\bar{\tau}$ is the maximum delay of all links in the digraph, \ie $\bar{\tau}=\max \left\{\tau^{ji}\right\}$. 
\end{enumerate}

%

\section{Accelerated Distributed Optimization over Delayed Unidirectional Networks}
In this section, we mitigate the impediments of induced communication delays, by embedding delay handling mechanisms into the ADD-OPT algorithm. However, such a strategy does not guarantee the convergence of network nodes' decision variables to the optimal solution of problem \eqref{eq:optimization_problem} when assuming a gradient step-size within the range for the original ADD-OPT algorithm. Hence, to guarantee the convergence of the R-ADD-OPT algorithm to the optimal solution, we provide a step-size analysis by computing the allowable step-size range which depends on the maximum delay in the network, given that it is bounded. 

\subsection{Robustified ADD-OPT Algorithm (R-ADD-OPT)}

We first introduce a modified version of the ADD-OPT algorithm that handles heterogeneous time-invariant, hereinafter called R-ADD-OPT. In particular, the Robustified Ratio Consensus algorithm is embedded into the ADD-OPT algorithm to ensure that the local decision variables $z^j$ reach the average consensus value $z$ regardless the transmission delays on the links, while the ADD-OPT algorithm shifts the coordinated value $z$ to the optimal solution of the optimization problem $z^*$. Similarly to the original ADD-OPT algorithm, each node $v_j \in \set{V}$ maintains three scalar variables, $x_k^j, y_k^j, w_k^j$ all $\in \mathbb{R}$, and sets $z_k^j=x_k^j/y_k^j$. At each iteration $k$, node $v_j$ assigns the weights to its states, and iteratively updates its variables at each time step $k\geq0$, according to:
\begin{align}\label{eq:r-add-opt}
x^{j}_{k+1} \! &= \!\!\! \sum_{v_i \in \inneighborS{j}} (p_{ji} x^{i}_{k-\tau^{ji}} - \alpha w^{i}_{k-\tau^{ji}}) \!-\! \alpha w^{j}_{k},\nonumber\\
y^{j}_{k+1} \! &= \!\!\! \sum_{v_i \in \inneighborS{j}}  p_{ji} y^{i}_{k-\tau^{ji}},\nonumber\\
z_{k+1}^{j} \! &=\frac{x_{k+1}^{j}}{y_{k+1}^{j}},\nonumber\\
w_{k+1}^{j} \! &= \!\!\! \sum_{v_i \in \inneighborS{j}}  p_{ji} w_{k-\tau^{ji}}^{i} \!+\! \nabla f_{j}\left(z_{k+1}^{j}\right) \!-\! \nabla f_{j}\left(z_{k}^{j}\right),
\end{align}
where $\inneighborS{j}=\{\inneighbor{j}\cup\{v_j\}\}$, and each node is initialized with $y_0^j=1$, $w_0^j=\nabla f_j(z_0^j)$, and arbitrary scalar values for $x_0^j$, and $z_0^j$. Clearly, when $\bar{\tau}=0$, then the R-ADD-OPT algorithms reduces to the ADD-OPT algorithm in \cite{xi2017add}. 


\subsection{Delayed Digraph Augmentation}
To model possibly delayed information exchange between the network nodes, we employ an augmented graph representation (as in \cite{hadjicostis2013average}) by adding extra virtual nodes that represent the delays on the links. The number of extra virtual nodes representing the delays on the links is bounded by the maximum delay in the network, $\bar{\tau}$. Hence, for each node $v_j \in \set{V}$, we add $\bar{\tau}$ extra virtual nodes, $v_{j}^{(1)}, v_{j}^{(2)}, \ldots, v_{j}^{(\bar{\tau})}$, where the virtual node $v_{j}^{(r)}$ holds the information that is destined to arrive at node $v_j$ after $r$ time steps. Thus, the augmented digraph consists of at most $n(\bar{\tau}+1)$ nodes and $(1+2 \bar{\tau})|\mathcal{E}|$ edges. In general, we augment a network of $n=|\mathcal{V}|$ nodes, by introducing $n\bar{\tau}$ nodes which results to a total of $\bar{n}=n(\bar{\tau}+1)$ nodes. An example of a two-node digraph where the information transmitted over the link $\varepsilon^{21}$ is delayed by $2$ time steps, and for the link $\varepsilon^{12}$ is delayed by $1$ time step, is provided in Fig.\ref{fig:delayed_graph}.
\begin{figure}[h]
    \centering
    \includegraphics{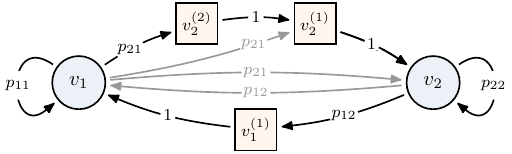}
    \caption{Example of the augmented digraph with fixed transmission delays.}
    \label{fig:delayed_graph}
\end{figure}

Based on the graph augmentation, we extend the original vectors of the ADD-OPT algorithm, to keep the information for all nodes and all possible delays, as follows: $\setlength{\arraycolsep}{0pt}\vect{\hat{x}}_{k}=
\left[\begin{array}{c}
\vect{x}_{k,(0)},\ldots, \vect{x}_{k,(\bar{\tau})}
\end{array}\right]^{\top}$,\;\;%
$\setlength{\arraycolsep}{0pt}\vect{\hat{y}}_{k}=
\left[\begin{array}{c}
\vect{y}_{k,(0)},\ldots, \vect{y}_{k,(\bar{\tau})}
\end{array}\right]^{\top}$,\;\;%
$\setlength{\arraycolsep}{0pt}\vect{\hat{z}}_{k}=
\left[\begin{array}{c}
\vect{z}_{k,(0)},\ldots, \vect{z}_{k,(\bar{\tau})}
\end{array}\right]^{\top}$,\;\;%
$\setlength{\arraycolsep}{0pt}\vect{\hat{w}}_{k}=
\left[\begin{array}{c}
\vect{w}_{k,(0)},\ldots, \vect{w}_{k,(\bar{\tau})}
\end{array}\right]^{\top}$, \; and %
$\setlength{\arraycolsep}{0pt}\nabla \vect{\hat{f}}(\vect{\hat{z}}_k)=
\left[\begin{array}{c}
\nabla \vect{f}(\vect{z}_{k,(0)})~\mathbf{0}_{n\bar{\tau}}^{\top})
\end{array}\right]^{\top}$, where $\nabla\vect{f}(\vect{z}_{k,(0)}) = \left[ \nabla f_1(z_1(k)), \ldots, \nabla f_n(z_n(k))\right]^{\top}$. Note that, the gradients of the virtual nodes $\nabla\vect{f}(\vect{z}_{k,(r)})$, for $r\neq 0$, are always zero.
The vector $\vect{x}_{k,(0)}$ contains all scalar states of all actual nodes, while $\vect{x}_{k,(r)}$ contains all scalar states of the virtual nodes generated for delay $r$. Hence, the R-ADD-OPT algorithm can be written in matrix-vector form (for the analysis presented in the subsequent sections) as follows:
\begin{subequations}\label{eq:r-add-opt-matrix-form}
\begin{align}
\mathbf{\hat{x}}_{k+1} &=\Xi \mathbf{\hat{x}}_{k}- \alpha \mathbf{\hat{w}}_{k}\label{eq:x}, \\
\mathbf{\hat{y}}_{k+1} &=\Xi \mathbf{\hat{y}}_{k}\label{eq:y}, \\
\mathbf{\hat{z}}_{k+1} &=Y^{-1}_{k+1} \mathbf{\hat{x}}_{k+1},\label{eq:z}\\
\mathbf{\hat{w}}_{k+1} &=\Xi \mathbf{\hat{w}}_{k}+\nabla \mathbf{\hat{f}}(\vect{\hat{z}}_{k+1})-\nabla \mathbf{\hat{f}}(\vect{\hat{z}}_k),\label{eq:w}
\end{align}
\end{subequations}
where $Y_k=\operatorname{diag}\left(\mathbf{\hat{y}}_k\right)$. Additionally, we have the initial conditions, $\vect{\hat{x}}_{0} = [V^{\top}~\mathbf{0}_{n\bar{\tau}}^{\top}]^{\top}$, $\hat{\mathbf{y}}_{0}= [\mathbf{1}_{n}^{\top} ~\mathbf{0}_{n\bar{\tau}}^{\top}]^{\top}$\footnote{Note, that the ratio $\vect{z}_{k,(1)},\ldots,\vect{z}_{k,(\bar{\tau})}$ of the virtual nodes could involve a division by zero since their corresponding $\vect{y}_{k,(1)},\ldots,\vect{y}_{k,(\bar{\tau})}$ initiates at 0. However, the values of the virtual nodes are never used to evaluate gradients but they are introduced for analysis purposes instead.}, $\hat{\mathbf{w}}_{0}=\nabla \hat{\mathbf{f}}_{0}(\vect{\hat{z}}_0)$. Matrix $\Xi_k \in \mathbb{R}_{+}^{\bar{n} \times \bar{n}}$ is a nonnegative matrix associated with the augmented digraph:
\begin{align}\label{eq:Xi}
\Xi \triangleq\left(\begin{array}{ccccc}
P^{(0)} & I_{n \times n} & 0 & \cdots & 0 \\
P^{(1)} & 0 & I_{n \times n} & \cdots & 0 \\
\vdots & \vdots & \vdots & \ddots & \vdots \\
P^{(\bar{\tau}-1)} & 0 & 0 & \cdots & I_{n \times n} \\
P^{(\bar{\tau})} & 0 & 0 & \cdots & 0
\end{array}\right)
\end{align} 
where each element of $P^{(r)}$ is determined by:
\begin{align}\label{eq:weighted_delayed_link}
P^{(r)}(j,i)= \begin{cases}P(j,i), & \text { if } \tau^{ji}=r, \; (j,i) \in \mathcal{E}, \\ 0, & \text { otherwise. }\end{cases}    
\end{align}
Clearly, matrix $\Xi$ maintains column-stochasticity, although the information sent over the links might be delayed.

\section{Convergence Analysis}\label{sec:main_results}
To analyze the R-ADD-OPT in the presence of transmission delays, we further define the following notation:%
\begin{subequations}\label{eq:generic_notation}
\begin{align*}
  \bar{\mathbf{x}}_{k} &=\frac{1}{\bar{n}} \mathbf{1}_{\bar{n}} \mathbf{1}_{\bar{n}}^{\top} \mathbf{\hat{x}}_{k} \Label{eq:x_matrix_form} & \vect{\hat{z}}^{*} &= [\mathbf{1}_{n}^{\top}z^{*}~ \mathbf{0}_{\bar{n}-n}^{\top}]^{\top} \Label{eq:z_matrix_form}\\
  \bar{\mathbf{w}}_{k} &=\frac{1}{\bar{n}} \mathbf{1}_{\bar{n}} \mathbf{1}_{\bar{n}}^{\top} \mathbf{\hat{w}}_{k}\Label{eq:w_matrix_form} 
  & \xi &=\left\|\Xi-I_{\bar{n}}\right\|_2 \Label{eq:xi_matrix_form}\\
  \vect{\bar{g}}^{\hat{z}}_{k} &=\frac{1}{\bar{n}} \mathbf{1}_{\bar{n}} \mathbf{1}_{\bar{n}}^{\top} \nabla \mathbf{\hat{f}}(\vect{\hat{z}}_{k})\Label{eq:g_matrix_form} 
  & \epsilon &=\left\|I_{\bar{n}}-\Xi_{\infty}\right\|_2 \Label{eq:epsilon_matrix_form}\\
  \vect{\bar{g}}^{\bar{x}}_{k} &=\frac{1}{\bar{n}} \mathbf{1}_{\bar{n}} \mathbf{1}_{\bar{n}}^{\top} \nabla \mathbf{\hat{f}}(\vect{\bar{x}}_{k})\Label{eq:h_matrix_form} & \eta &=\max (\chi,\zeta)\Label{eq:eta}
\end{align*}
\end{subequations}
where $z^*$ comes from assumption A3, $\zeta=|1-\bar{n} \alpha L|$, $\chi=|1-\bar{n} \alpha \mu|$,
$\Xi_{\infty}=\lim_{k\rightarrow \infty}\Xi^k$, and $Y_{\infty}=\lim _{k \rightarrow \infty} Y_{k}$. The convergence of $\Xi$ and $Y_{\infty}$ is shown through the Lemma~\ref{lem:sigma} and Lemma~\ref{lem:rgamma1}.

\begin{lem}[Xi \etal \cite{xi2017add}]\label{lem:sigma} 
Let assumption A1 and A2 hold. Consider $Y_{\infty}=\lim_{k\rightarrow \infty} Y_k$ and $\Xi_k=\Xi$ being the column-stochastic matrix as defined in \eqref{eq:Xi}, with its non-$\mathbf{1}_{\bar{n}}$ Perron vectors denoted by $\boldsymbol{\pi}$. Then, for any vector $\mathbf{a} \in \mathbb{R}^{\bar{n}}$, and for $\bar{\mathbf{a}}=\frac{1}{\bar{n}}\mathbf{1}_{\bar{n}} \mathbf{1}_{\bar{n}}^{\top} \mathbf{a}$, there exists $0<\sigma<1$ such that $\forall k$
\begin{align}
\left\|\Xi \mathbf{a}-Y_{\infty} \bar{\mathbf{a}} \right\|_{\boldsymbol{\pi}} \leq \sigma\left\|\mathbf{a}-Y_{\infty} \bar{\mathbf{a}} \right\|_{\boldsymbol{\pi}}.
\end{align}
\end{lem}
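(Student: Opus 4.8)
The plan is to read the estimate as a contraction bound for the primitive column-stochastic matrix $\Xi$ in a norm adapted to its Perron eigenvector, following the ADD-OPT argument \cite{xi2017add}; the only genuinely new work is to confirm that the delay-augmented $\Xi$ in \eqref{eq:Xi} still satisfies the structural hypotheses. First I would establish that $\Xi$ is primitive: column-stochasticity was already observed below \eqref{eq:weighted_delayed_link}, the augmented digraph inherits strong connectivity from A1 (as in \cite{hadjicostis2013average}), and the self-loops $p_{jj}>0$ from \eqref{eq:weights} make it aperiodic. It then follows that $\Xi^k\to\Xi_\infty=\boldsymbol{\pi}\mathbf{1}_{\bar n}^{\top}$ with $\boldsymbol{\pi}>\mathbf{0}_{\bar n}$ and $\mathbf{1}_{\bar n}^{\top}\boldsymbol{\pi}=1$, and that $Y_\infty=\operatorname{diag}(\hat{\mathbf y}_\infty)$ with $\hat{\mathbf y}_\infty=\Xi_\infty\hat{\mathbf y}_0$ a positive scalar multiple of $\boldsymbol{\pi}$.

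The central reduction is to strip off the non-contracting mode. Because $\hat{\mathbf y}_\infty$ is proportional to $\boldsymbol{\pi}$, the vector $Y_\infty\bar{\mathbf a}=\operatorname{diag}(\hat{\mathbf y}_\infty)\,\bar{\mathbf a}$ is itself a multiple of $\boldsymbol{\pi}$ and hence a fixed point of $\Xi$, so that $\Xi(Y_\infty\bar{\mathbf a})=Y_\infty\bar{\mathbf a}$. Substituting this identity yields
\begin{equation*}
\Xi\mathbf a-Y_\infty\bar{\mathbf a}=\Xi\bigl(\mathbf a-Y_\infty\bar{\mathbf a}\bigr),
\end{equation*}
and it remains to show that $\Xi$ strictly contracts the residual $\mathbf e:=\mathbf a-Y_\infty\bar{\mathbf a}$ in the weighted norm.

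For the contraction I would use that, once the Perron component is removed, $\Xi$ acts on the residual as $\Xi-\Xi_\infty$. The latter annihilates $\boldsymbol{\pi}$ and coincides with $\Xi$ on the complementary invariant subspace $\{\mathbf v:\mathbf{1}_{\bar n}^{\top}\mathbf v=0\}$, whose spectrum is exactly the set of subdominant eigenvalues of $\Xi$; primitivity then gives $\rho(\Xi-\Xi_\infty)<1$. Invoking the standard fact that any matrix with spectral radius below one admits an induced norm in which it is a contraction, I would take $\|\cdot\|_{\boldsymbol{\pi}}$ to be such a Perron-adapted norm and fix any $\sigma\in(\rho(\Xi-\Xi_\infty),1)$, which delivers $\|\Xi\mathbf e\|_{\boldsymbol{\pi}}=\|(\Xi-\Xi_\infty)\mathbf e\|_{\boldsymbol{\pi}}\le\sigma\|\mathbf e\|_{\boldsymbol{\pi}}$; since $\Xi$ is time-invariant this holds for every $k$.

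The step I expect to require the most care is the one that places the residual in the contracting subspace, namely verifying $\mathbf{1}_{\bar n}^{\top}\mathbf e=0$. This couples the averaging normalization in $\bar{\mathbf a}$ to the total mass $\mathbf{1}_{\bar n}^{\top}\hat{\mathbf y}_k$ transported by the auxiliary iterate, and the delay augmentation---where $\bar n>n$ yet only the first $n$ entries of $\hat{\mathbf y}_0$ are nonzero---is precisely where this bookkeeping must be reconciled, so that $Y_\infty\bar{\mathbf a}$ removes the full $\boldsymbol{\pi}$-component of $\mathbf a$ rather than only a fraction of it. Getting this normalization consistent, rather than the abstract contraction which is standard, is the crux; a lesser technical point is to exhibit $\|\cdot\|_{\boldsymbol{\pi}}$ explicitly in terms of $\boldsymbol{\pi}$ and to confirm that $\sigma$ can be chosen independently of $k$.
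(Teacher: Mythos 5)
Your skeleton is essentially the paper's: establish that the augmented $\Xi$ is primitive column-stochastic with $\Xi_{\infty}=\boldsymbol{\pi}\mathbf{1}_{\bar{n}}^{\top}$, reduce $\Xi\mathbf{a}-Y_{\infty}\bar{\mathbf{a}}$ to the action of $\Xi-\Xi_{\infty}$ on the residual $\mathbf{e}=\mathbf{a}-Y_{\infty}\bar{\mathbf{a}}$, and contract in a $\boldsymbol{\pi}$-adapted norm. But the step you yourself single out as the crux, $\mathbf{1}_{\bar{n}}^{\top}\mathbf{e}=0$, is left unverified, and under the paper's literal definitions it is \emph{false}: with $\hat{\mathbf{y}}_{0}=[\mathbf{1}_{n}^{\top}~\mathbf{0}_{n\bar{\tau}}^{\top}]^{\top}$ one gets $\hat{\mathbf{y}}_{\infty}=\Xi_{\infty}\hat{\mathbf{y}}_{0}=n\boldsymbol{\pi}$, hence $Y_{\infty}\bar{\mathbf{a}}=\frac{n}{\bar{n}}\left(\mathbf{1}_{\bar{n}}^{\top}\mathbf{a}\right)\boldsymbol{\pi}=\frac{n}{\bar{n}}\,\Xi_{\infty}\mathbf{a}$ and $\mathbf{1}_{\bar{n}}^{\top}\mathbf{e}=\left(1-\frac{n}{\bar{n}}\right)\mathbf{1}_{\bar{n}}^{\top}\mathbf{a}$, which is nonzero whenever $\bar{\tau}\geq 1$. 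Your first substitution $\Xi\mathbf{a}-Y_{\infty}\bar{\mathbf{a}}=\Xi\mathbf{e}$ is sound (it needs only $Y_{\infty}\bar{\mathbf{a}}\propto\boldsymbol{\pi}$), but the second, $\Xi\mathbf{e}=(\Xi-\Xi_{\infty})\mathbf{e}$, needs $\Xi_{\infty}\mathbf{e}=\mathbf{0}_{\bar{n}}$, equivalently $Y_{\infty}\bar{\mathbf{a}}=\Xi_{\infty}\mathbf{a}$, which is exactly what fails by the factor $n/\bar{n}$. The paper's own proof glosses the identical point: its first line sets $\Xi\mathbf{a}-Y_{\infty}\bar{\mathbf{a}}=\Xi\mathbf{a}-\Xi_{\infty}\mathbf{a}$ without comment, and the proof of Lemma~\ref{lem:Xi_linear_relation} asserts $\frac{1}{\bar{n}}Y_{\infty}\mathbf{1}_{\bar{n}}\mathbf{1}_{\bar{n}}^{\top}=\Xi_{\infty}$, which would require $\hat{\mathbf{y}}_{\infty}=\bar{n}\boldsymbol{\pi}$. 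The repair is a normalization fix, not more computation: define $\bar{\mathbf{a}}=\frac{1}{n}\mathbf{1}_{\bar{n}}\mathbf{1}_{\bar{n}}^{\top}\mathbf{a}$, dividing by the total mass $\mathbf{1}_{\bar{n}}^{\top}\hat{\mathbf{y}}_{0}=n$ rather than by $\bar{n}$, after which $Y_{\infty}\bar{\mathbf{a}}=\Xi_{\infty}\mathbf{a}$ holds exactly and both your argument and the paper's go through. You located the fault line correctly, but a proof must actually carry out this reconciliation; leaving it as ``must be reconciled'' is a genuine gap.

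Where you do depart from the paper is the contraction mechanism. The paper obtains $\sigma$ concretely from Lemma~\ref{lem:xin2019}: $\sigma=\left\|\Xi-\Xi_{\infty}\right\|_{\boldsymbol{\pi}}<1$ in an explicit $\boldsymbol{\pi}$-weighted Euclidean norm, with $\sigma=\sigma_2\left(\operatorname{diag}\left(\sqrt{\boldsymbol{\pi}}\right)^{-1}\Xi\operatorname{diag}\left(\sqrt{\boldsymbol{\pi}}\right)\right)$; this makes $\sigma$ computable (it is how Table~\ref{tab:sigma_vs_delays} is produced) and keeps $\left\|\cdot\right\|_{\boldsymbol{\pi}}$ a fixed norm whose equivalence constants $c,d$ enter $G$ in \eqref{eq:matrices} and the step-size bound \eqref{eq:alpha}. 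Your route---$\rho(\Xi-\Xi_{\infty})<1$ by primitivity plus the abstract fact that spectral radius below one admits a contracting induced norm, then \emph{defining} $\left\|\cdot\right\|_{\boldsymbol{\pi}}$ to be that norm---proves the existential statement of this lemma in isolation, but is weaker for the paper's purposes, since it yields no computable $\sigma$ and unmoors $\left\|\cdot\right\|_{\boldsymbol{\pi}}$ from the specific norm the downstream analysis presupposes. One further caution on your hypothesis check: the virtual nodes carry no self-loops, and a virtual node $v_j^{(r)}$ receives no edge at all when no in-link of $v_j$ has delay exactly $r$, so the augmented digraph need not be strongly connected and $\Xi$ can be reducible; ``inherits strong connectivity from A1'' is not a one-line appeal but requires the argument of \cite{hadjicostis2013average} restricted to the reachable part of the augmentation.
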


\begin{lem}[Nedic \etal \cite{nedic2014distributed}]\label{lem:rgamma1}
Let Assumptions A1 and A2 hold, and consider $Y_{k}$ and its limit $Y_{\infty}$ as generated from the weight matrix $\Xi$. Then, there exist $0<\gamma_1<1$ (the contraction factor defined in Lemma~\ref{lem:sigma} and $0<\psi<\infty$ such that $\forall k$
\begin{align}
\left\|Y_{k}-Y_{\infty}\right\| \leq \psi \gamma_1^{k} .
\end{align}
\end{lem}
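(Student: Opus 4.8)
The plan is to exploit the fact that the $\hat{\mathbf{y}}$-recursion \eqref{eq:y} is a pure linear consensus iteration, entirely decoupled from the $\hat{\mathbf{x}}$- and $\hat{\mathbf{w}}$-dynamics. Since $\hat{\mathbf{y}}_{k+1}=\Xi\hat{\mathbf{y}}_k$, we have $\hat{\mathbf{y}}_k=\Xi^k\hat{\mathbf{y}}_0$ and $Y_k-Y_\infty=\diag\!\left(\hat{\mathbf{y}}_k-\hat{\mathbf{y}}_\infty\right)$, so it suffices to establish a geometric bound on the vector discrepancy $\hat{\mathbf{y}}_k-\hat{\mathbf{y}}_\infty$ and then transfer it to the diagonal matrix. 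First I would record the two structural facts that make everything work: $\Xi$ is primitive column-stochastic, so $\Xi_\infty=\lim_k\Xi^k=\boldsymbol{\pi}\mathbf{1}_{\bar n}^\top$ exists with $\hat{\mathbf{y}}_\infty=\Xi_\infty\hat{\mathbf{y}}_0$; and column-stochasticity conserves the total mass, $\mathbf{1}_{\bar n}^\top\hat{\mathbf{y}}_k=\mathbf{1}_{\bar n}^\top\hat{\mathbf{y}}_0$ for all $k$.

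The core step is to obtain the geometric rate from Lemma~\ref{lem:sigma}. Applying that lemma with $\mathbf{a}=\hat{\mathbf{y}}_k$ gives $\bar{\mathbf{a}}=\tfrac{1}{\bar n}\mathbf{1}_{\bar n}\mathbf{1}_{\bar n}^\top\hat{\mathbf{y}}_k$, and the conserved mass means that this $\bar{\mathbf{a}}$, hence the contraction target $Y_\infty\bar{\mathbf{a}}$, is the same vector for every $k$; by uniqueness of the limit it must coincide with $\hat{\mathbf{y}}_\infty$. Lemma~\ref{lem:sigma} then reads $\|\hat{\mathbf{y}}_{k+1}-\hat{\mathbf{y}}_\infty\|_{\boldsymbol{\pi}}\le\sigma\|\hat{\mathbf{y}}_k-\hat{\mathbf{y}}_\infty\|_{\boldsymbol{\pi}}$, and iterating this one-step contraction telescopes to $\|\hat{\mathbf{y}}_k-\hat{\mathbf{y}}_\infty\|_{\boldsymbol{\pi}}\le\gamma_1^k\|\hat{\mathbf{y}}_0-\hat{\mathbf{y}}_\infty\|_{\boldsymbol{\pi}}$ with $\gamma_1=\sigma\in(0,1)$. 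An equivalent self-contained route, should one wish to avoid the weighted norm, is to use that $\Xi_\infty$ is idempotent and satisfies $\Xi\Xi_\infty=\Xi_\infty\Xi=\Xi_\infty$, whence $(\Xi-\Xi_\infty)^k=\Xi^k-\Xi_\infty$; since $\rho(\Xi-\Xi_\infty)=|\lambda_2(\Xi)|<1$, Gelfand's formula yields $\|\Xi^k-\Xi_\infty\|\le C\gamma_1^k$ for any chosen $\gamma_1\in(|\lambda_2(\Xi)|,1)$.

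It remains to pass from the vector estimate to the matrix norm in the statement. For a diagonal matrix one has $\|\diag(\mathbf{v})\|\le c_1\|\mathbf{v}\|$ for a fixed constant $c_1$ depending only on the chosen norm, and the weighted norm $\|\cdot\|_{\boldsymbol{\pi}}$ is equivalent to the working norm with constants $c_2,c_3$. Combining these with the telescoped bound, $\|Y_k-Y_\infty\|\le c_1\|\hat{\mathbf{y}}_k-\hat{\mathbf{y}}_\infty\|\le c_1c_2\,\gamma_1^k\,\|\hat{\mathbf{y}}_0-\hat{\mathbf{y}}_\infty\|_{\boldsymbol{\pi}}$, and setting $\psi=c_1c_2\|\hat{\mathbf{y}}_0-\hat{\mathbf{y}}_\infty\|_{\boldsymbol{\pi}}<\infty$ (finite because $\hat{\mathbf{y}}_0$ and $\hat{\mathbf{y}}_\infty$ are fixed vectors) delivers $\|Y_k-Y_\infty\|\le\psi\gamma_1^k$, as claimed.

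I expect the main obstacle to be conceptual rather than computational: verifying that the contraction target supplied by Lemma~\ref{lem:sigma} is genuinely $k$-invariant and equal to the true limit $\hat{\mathbf{y}}_\infty$ — this rests on the conservation $\mathbf{1}_{\bar n}^\top\hat{\mathbf{y}}_k=\mathbf{1}_{\bar n}^\top\hat{\mathbf{y}}_0$ together with uniqueness of the limit, and is precisely what allows the one-step contraction to telescope into a clean $\gamma_1^k$ decay. The remaining work, namely the diagonal-extraction inequality and the norm-equivalence bookkeeping needed to land in the exact matrix norm of the statement, is routine and only affects the constant $\psi$.
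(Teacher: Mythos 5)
Your proof is correct, and it is worth noting that the paper itself contains no proof of this lemma to compare against: it is imported by citation from Nedic \etal \cite{nedic2014distributed}, where the analogous result is established for products of \emph{time-varying} column-stochastic matrices via ergodicity-type bounds on $\|\Xi_k\cdots\Xi_0-\Xi_\infty\|$. Your derivation is instead self-contained within this paper's static setting: since $\hat{\mathbf{y}}_k=\Xi^k\hat{\mathbf{y}}_0$ decouples from the $\hat{\mathbf{x}}$- and $\hat{\mathbf{w}}$-dynamics, column-stochasticity makes $\bar{\mathbf{a}}=\frac{1}{\bar n}\mathbf{1}_{\bar n}\mathbf{1}_{\bar n}^\top\hat{\mathbf{y}}_k$ a $k$-invariant target, the one-step contraction of Lemma~\ref{lem:sigma} telescopes, and uniqueness of the limit identifies the target with $\hat{\mathbf{y}}_\infty$; diagonal extraction plus norm equivalence then transfers the vector decay to $\|Y_k-Y_\infty\|$. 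This buys two things the citation does not: it makes explicit that one may take $\gamma_1=\sigma$, which is exactly what the lemma's parenthetical asserts but which is not obvious from the time-varying machinery of \cite{nedic2014distributed}, and it keeps the whole argument internal to the augmented matrix $\Xi$ of \eqref{eq:Xi}. Two caveats, neither of which is a gap you introduced: (i) under the paper's initialization $\hat{\mathbf{y}}_0=[\mathbf{1}_n^\top~\mathbf{0}_{n\bar\tau}^\top]^\top$ the conserved mass is $n$ rather than $\bar n$, so literally $Y_\infty\bar{\mathbf{a}}=\frac{n}{\bar n}\hat{\mathbf{y}}_\infty$; your uniqueness-of-limit step is precisely what absorbs this normalization wrinkle (the paper shares it, cf.\ the identity $\frac{1}{\bar n}Y_\infty\mathbf{1}_{\bar n}\mathbf{1}_{\bar n}^\top=\Xi_\infty$ invoked in the proof of Lemma~\ref{lem:Xi_linear_relation}), and your fallback route via $(\Xi-\Xi_\infty)^k=\Xi^k-\Xi_\infty$, $\rho(\Xi-\Xi_\infty)<1$, and Gelfand's formula sidesteps it entirely; (ii) both routes presuppose that $\Xi$ is primitive, which the delay augmentation only guarantees when every top buffer node $v_j^{(\bar\tau)}$ has an incoming edge from an actual node — this assumption is made tacitly by the paper in Lemma~\ref{lem:sigma}, and you inherit it rather than create it.
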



Given the notation for the delayed case in \eqref{eq:generic_notation}, we can further denote
$\mathbf{t}_{k}, \mathbf{s}_{k} \in \mathbb{R}^{3}$, and $G, H_{k} \in \mathbb{R}^{3 \times 3}$, for all $k$ as:

\begin{align}
\mathbf{t}_{k} &=\left[\begin{array}{cc}
\left\|\vect{\hat{x}}_{k}-Y_{\infty} \vect{\bar{x}}_{k}\right\| \\
\left\|\vect{\bar{x}}_{k}-\vect{\hat{z}}^{*}\right\|_2\nonumber \\
\left\|\vect{\hat{w}}_{k}-Y_{\infty} \vect{\bar{g}}^{\hat{z}}_k\right\|
\end{array}\right], \quad \mathbf{s}_{k}=\left[\begin{array}{c}
\left\|\mathbf{\hat{x}}_{k}\right\|_2 \\
0 \\
0
\end{array}\right], \nonumber\\
G &=\left[\begin{array}{ccc}
\sigma & 0 & \alpha \\
\alpha c L \tilde{y} & \eta & 0 \\
c d \epsilon L \tilde{y}\left(\xi+\alpha L y \tilde{y}\right) & \alpha d \epsilon L^{2} y \tilde{y} & \sigma+\alpha c d \epsilon L \tilde{y}
\end{array}\right],\nonumber\\
H_{k} &=\left[\begin{array}{ccc}
0 & 0 & 0 \\
\alpha L \tilde{y} \psi \gamma_{1}^{k-1} & 0 & 0 \\
(\alpha L y+2) d \epsilon L \tilde{y}^{2} \psi \gamma_{1}^{k-1} & 0 & 0
\end{array}\right],\label{eq:matrices}&&
\end{align}
where $\sigma$ is given in Lemma~\ref{lem:xin2019}, $c$ and $d$ are positive constants from the equivalence of $\left\| \cdot \right\|$ and $\left\| \cdot \right\|_2$, $L$ is the Lipschitz-continuity constant from assumption A3, $y=\sup _{k}\left\|Y_{k}\right\|_{2}$, $\tilde{y}=\sup _{k}\left\|Y_{k}^{-1}\right\|_{2}$, and $\xi$, $\epsilon$, $\eta$ are defined in \eqref{eq:generic_notation}.

\begin{lem}[Xi \etal \cite{xin2019distributed}]\label{lem:xin2019}
Let assumption A1 hold and consider an irreducible and primitive nonnegative matrix $W \in \mathbb{R}_{+}^{n \times n}$ and its non-$\mathbf{1}_{n}$ Perron vector denoted by $\vect{v}$. Then $\forall \mathbf{a} \in \mathbb{R}^{n}$ we have: 
\begin{align}
\left\|W \mathbf{a}-W_{\infty} \mathbf{a}\right\|_{\vect{v}} \leq \sigma_{W}\left\|\mathbf{a}-W_{\infty} \mathbf{a}\right\|_{\vect{v}}
\end{align}
where $\sigma_{W} \triangleq\left\|W-W_{\infty}\right\|_{\vect{v}}<1$. One can also verify that
\begin{align}
\sigma_{W}=\sigma_2\left(\operatorname{diag}\left(\sqrt{\vect{v}}\right)^{-1} W \operatorname{diag}\left(\sqrt{\vect{v}}\right)\right)
\end{align}
where $\sigma_2(\cdot)$ is the second largest singular value of a matrix.
\end{lem}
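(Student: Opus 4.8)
The plan is to reduce the weighted contraction to a statement about the spectral norm of a symmetric similarity transform of $W$, and then extract the strict bound from Perron--Frobenius theory. Throughout I write $D=\diag(\sqrt{\vect{v}})$, which is well defined and invertible because the Perron vector satisfies $\vect{v}>0$, and I equip $\mathbb{R}^{n}$ with the weighted norm $\|\vect{a}\|_{\vect{v}}=\|D^{-1}\vect{a}\|_2$, so that the induced matrix norm of any $M$ is $\|M\|_{\vect{v}}=\|D^{-1}MD\|_2$. Set $\tilde{W}=D^{-1}WD$, which is nonnegative and primitive since it shares the zero pattern of $W$ ($D$ being positive diagonal), and let $\vect{u}=\sqrt{\vect{v}}$ entrywise, a genuine unit vector because $\|\vect{u}\|_2^2=\mathbf{1}_{n}^{\top}\vect{v}=1$ under the normalization $\mathbf{1}_{n}^{\top}\vect{v}=1$.

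First I would record the algebraic identity that turns the contraction into a pure matrix-norm estimate. Since $W$ is column-stochastic with $W\vect{v}=\vect{v}$ and $\mathbf{1}_{n}^{\top}W=\mathbf{1}_{n}^{\top}$, its limit is the rank-one projector $W_{\infty}=\vect{v}\mathbf{1}_{n}^{\top}$, which satisfies $WW_{\infty}=W_{\infty}W=W_{\infty}^{2}=W_{\infty}$. A direct expansion then gives $W\vect{a}-W_{\infty}\vect{a}=(W-W_{\infty})(\vect{a}-W_{\infty}\vect{a})$ for every $\vect{a}$, so submultiplicativity of $\|\cdot\|_{\vect{v}}$ yields the claimed inequality with $\sigma_{W}=\|W-W_{\infty}\|_{\vect{v}}$. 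The entire content of the lemma therefore lies in evaluating $\sigma_{W}$ and in proving $\sigma_{W}<1$.

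Next I would compute $\sigma_{W}=\|D^{-1}(W-W_{\infty})D\|_2=\|\tilde{W}-\vect{u}\vect{u}^{\top}\|_2$, using $D^{-1}\vect{v}=\vect{u}$ and $\mathbf{1}_{n}^{\top}D=\vect{u}^{\top}$. A short calculation gives $\tilde{W}\vect{u}=\vect{u}$ and $\vect{u}^{\top}\tilde{W}=\vect{u}^{\top}$, so $\vect{u}$ is simultaneously a right and a left eigenvector of $\tilde{W}$ for the eigenvalue $1$; consequently $\vect{u}\vect{u}^{\top}$ is the orthogonal projector onto $\mathrm{span}(\vect{u})$, the subspace $\vect{u}^{\perp}$ is $\tilde{W}$-invariant, and $\tilde{W}-\vect{u}\vect{u}^{\top}$ acts as $\tilde{W}$ on $\vect{u}^{\perp}$ and as $0$ on $\vect{u}$. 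In an orthonormal basis adapted to $\mathrm{span}(\vect{u})\oplus\vect{u}^{\perp}$ the matrix $\tilde{W}$ becomes block-diagonal $\mathrm{diag}(1,B)$, so its singular values are $1$ together with those of $B$, and $\|\tilde{W}-\vect{u}\vect{u}^{\top}\|_2=\|B\|_2$. To identify this with $\sigma_2(\tilde{W})$ I must check that $1$ is the \emph{largest} singular value: since $\tilde{W}\ge 0$ entrywise, the symmetric matrix $\tilde{W}^{\top}\tilde{W}\ge 0$ is nonnegative and admits the positive eigenvector $\vect{u}$ (because $\tilde{W}^{\top}\tilde{W}\vect{u}=\tilde{W}^{\top}\vect{u}=\vect{u}$), and a positive eigenvector of a nonnegative matrix necessarily belongs to its spectral radius; hence $\rho(\tilde{W}^{\top}\tilde{W})=1$, $\sigma_1(\tilde{W})=1$, and $\sigma_{W}=\|B\|_2=\sigma_2(\tilde{W})$, which is the stated formula.

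The hard part is the strict inequality $\sigma_2(\tilde{W})<1$, i.e.\ $\|B\|_2<1$. Note that $\rho(B)<1$ comes for free from primitivity, since $(\tilde{W}-\vect{u}\vect{u}^{\top})^{k}=\tilde{W}^{k}-\vect{u}\vect{u}^{\top}\to 0$, but a spectral-radius bound does not control the spectral norm, so the Perron structure must be used more sharply. The plan is to prove that the Perron eigenvalue $1$ of $\tilde{W}^{\top}\tilde{W}$ is \emph{simple}: any unit $\vect{w}\perp\vect{u}$ with $\|\tilde{W}\vect{w}\|_2=1$ would be a second eigenvector of $\tilde{W}^{\top}\tilde{W}$ at its spectral radius, which is impossible once $\tilde{W}^{\top}\tilde{W}$ is shown to be irreducible. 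Irreducibility follows from strong connectivity together with the self-loops carried by $W$: a positive diagonal of $\tilde{W}$ forces $(\tilde{W}^{\top}\tilde{W})_{ij}>0$ whenever $\tilde{W}_{ij}>0$ or $\tilde{W}_{ji}>0$, so the graph of $\tilde{W}^{\top}\tilde{W}$ contains the connected symmetrization of the graph of $\tilde{W}$. With simplicity established, $\sigma_1(\tilde{W})=1$ is strictly separated from $\sigma_2(\tilde{W})$, giving $\sigma_{W}<1$ and completing the proof. I expect this last step---upgrading $\rho(B)<1$ to the operator-norm bound $\|B\|_2<1$ via simplicity of the top singular value---to be the main obstacle, since it is precisely where primitivity (not merely irreducibility) and the weighting by $\vect{v}$ are indispensable.
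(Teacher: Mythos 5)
The paper offers no proof of this lemma at all---it is quoted verbatim from Xin and Khan \cite{xin2019distributed}---so your proposal must be judged on its own terms, and most of it is sound and follows the standard route. The identity $W\mathbf{a}-W_{\infty}\mathbf{a}=(W-W_{\infty})(\mathbf{a}-W_{\infty}\mathbf{a})$ plus submultiplicativity of the induced weighted norm (exactly the manipulation this paper performs in its own proof of Lemma~\ref{lem:sigma}), the observation that $\vect{u}=\sqrt{\vect{v}}$ is simultaneously a left and right unit eigenvector of $\tilde{W}=D^{-1}WD$, the resulting block-diagonal form $\mathrm{diag}(1,B)$ in an adapted orthonormal basis, and the identification $\sigma_1(\tilde{W})=1$ via the positive eigenvector $\vect{u}$ of $\tilde{W}^{\top}\tilde{W}$ are all correct, and together they legitimately yield $\sigma_{W}=\|B\|_2=\sigma_2(\tilde{W})$.

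The genuine gap sits exactly where you predicted it. Your strictness argument needs $\tilde{W}$ to have a strictly positive diagonal (``the self-loops carried by $W$'') to make $\tilde{W}^{\top}\tilde{W}$ irreducible, but positive diagonal is \emph{not} among the lemma's hypotheses (only irreducibility and primitivity), and---decisively for this paper---it fails for the matrix to which the lemma is actually applied: in the proof of Lemma~\ref{lem:sigma} the result is invoked with $W=\Xi$ from \eqref{eq:Xi}, and every virtual delay node has a zero diagonal entry, since only the block $P^{(0)}$ carries self-loops. So your proof covers $W=P$ (positive diagonal by \eqref{eq:weights}) but not $W=\Xi$. Worse, the gap cannot be closed under the stated hypotheses, because $\sigma_{W}<1$ is simply false for general irreducible primitive column-stochastic matrices: take
\begin{align*}
W=\begin{pmatrix} \tfrac{1}{2} & 0 & 1\\ \tfrac{1}{2} & 0 & 0\\ 0 & 1 & 0 \end{pmatrix},
\qquad
\vect{v}=\left(\tfrac{1}{2},\,\tfrac{1}{4},\,\tfrac{1}{4}\right)^{\top},
\end{align*}
which is irreducible and primitive (self-loop at node~1, cycle $1\to2\to3\to1$), yet $\diag(\sqrt{\vect{v}})^{-1}W\diag(\sqrt{\vect{v}})$ has singular values $\{1,1,\tfrac{1}{2}\}$, so $\sigma_{W}=\sigma_2=1$: here $\vect{u}^{\perp}$ contains a vector whose norm $\tilde{W}$ preserves, which is precisely the degeneracy your simplicity argument is designed to exclude and cannot, absent the positive diagonal. (The original result in \cite{xin2019distributed} is stated for weight matrices with strictly positive diagonals; the restatement here dropped that hypothesis.) In short: you have a correct proof of the positive-diagonal version of the lemma, but as a proof of the lemma as stated---and as used on the delay-augmented matrix $\Xi$---it has a hole that no proof can fill without additional assumptions; a complete treatment would have to verify $\sigma_2(D^{-1}\Xi D)<1$ directly for the specific augmented structure (e.g., ruling out norm-preserving vectors supported on the pure-shift delay chains), not merely cite primitivity.
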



In order to prove convergence of the R-ADD-OPT algorithm, we first need to show that the evolution of $\left\|\vect{\hat{x}}_{k}-Y_{\infty} \vect{\hat{x}}_{k}\right\|$, $\left\|\vect{\hat{x}}_k - \vect{\hat{z}}^{*} \right\|_2$, and $\left\|\vect{\hat{w}}_k-Y_{\infty} \vect{\bar{g}}^{\hat{z}}_k\right\|$ are bounded. In other words, these elements should be bounded with respect to their predecessor state. Given the notation in \eqref{eq:matrices} we introduce Lemma~\ref{lem:Xi_linear_relation}, by which we show that these elements are kept bounded (the proof is reported in the Appendix).

\begin{lem}\label{lem:Xi_linear_relation}
Consider assumptions A1-A4 for the case of time-invariant delays, and let $\mathbf{t}_{k}, \mathbf{s}_{k}, G$, and $H_{k}$ be defined as in \eqref{eq:matrices}, then the following linear relation is satisfied:%
\begin{align}\label{eq:linear_relation}
\mathbf{t}_{k} \leq G \mathbf{t}_{k-1}+H_{k-1} \mathbf{s}_{k-1} .
\end{align}
\end{lem}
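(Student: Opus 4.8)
The plan is to establish the three scalar inequalities comprising the vector relation \eqref{eq:linear_relation} one component at a time, tracking each entry of $\mathbf{t}_k$ through its defining recursion in \eqref{eq:r-add-opt-matrix-form}. First I would handle the consensus error $\|\vect{\hat{x}}_k - Y_\infty \vect{\bar{x}}_k\|$. Substituting $\vect{\hat{x}}_{k} = \Xi \vect{\hat{x}}_{k-1} - \alpha \vect{\hat{w}}_{k-1}$ and using that $Y_\infty \vect{\bar{x}}_k$ is the (scaled) projection onto the Perron direction, I would invoke Lemma~\ref{lem:sigma} to contract the $\Xi$-part by the factor $\sigma$, producing the $(1,1)$ entry. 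The stray $-\alpha \vect{\hat{w}}_{k-1}$ term, after rewriting $\vect{\hat{w}}_{k-1}$ relative to $Y_\infty \vect{\bar{g}}^{\hat{z}}_{k-1}$ and exploiting $\mathbf{1}_{\bar n}^\top \Xi = \mathbf{1}_{\bar n}^\top$ (so averaged gradient terms telescope), should contribute the $\alpha$ coefficient in the $(1,3)$ slot; any leftover is absorbed into $H_{k-1}\mathbf{s}_{k-1}$. This first row has no $H$ contribution, consistent with the zero top row of $H_k$.

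Next I would treat the optimality gap $\|\vect{\bar{x}}_k - \vect{\hat{z}}^*\|_2$. Averaging \eqref{eq:x} gives $\vect{\bar{x}}_k = \vect{\bar{x}}_{k-1} - \alpha \bar{\mathbf{w}}_{k-1}$, and since the averaged gradient tracker equals the averaged true gradient, $\bar{\mathbf{w}}_{k-1}$ reduces to $\vect{\bar{g}}^{\hat z}_{k-1}$. I would then add and subtract $\vect{\bar{g}}^{\bar x}_{k-1}$ so that a gradient-descent step on the aggregate strongly convex objective appears: standard contraction for gradient descent under Assumption~A3 with step $\bar n \alpha$ yields the factor $\eta = \max(\chi,\zeta)$, producing the $(2,2)$ entry. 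The mismatch $\|\vect{\bar{g}}^{\hat z}_{k-1} - \vect{\bar{g}}^{\bar x}_{k-1}\|$ is controlled by $L$-Lipschitzness of the gradient against $\|\vect{\hat{z}}_{k-1} - \vect{\bar{x}}_{k-1}\|$, which in turn is bounded by $\|\vect{\hat{x}}_{k-1} - Y_\infty \vect{\bar{x}}_{k-1}\|$ after accounting for $Y^{-1}$ via $\tilde y = \sup_k \|Y_k^{-1}\|_2$; this gives the $(2,1)$ coefficient $\alpha c L \tilde y$, while the discrepancy between $Y_k^{-1}$ and $Y_\infty^{-1}$ picks up the $\gamma_1^{k-1}$ decay from Lemma~\ref{lem:rgamma1}, feeding the $(2,1)$ entry of $H_{k-1}$.

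Finally I would bound the gradient-tracking error $\|\vect{\hat{w}}_k - Y_\infty \vect{\bar{g}}^{\hat z}_k\|$ using \eqref{eq:w}. Here I would apply Lemma~\ref{lem:sigma} again to the $\Xi$-contraction of the tracking error (yielding $\sigma$ in the $(3,3)$ entry) and then expand the gradient difference $\nabla\vect{\hat f}(\vect{\hat z}_k) - \nabla\vect{\hat f}(\vect{\hat z}_{k-1})$ via Lipschitz continuity in terms of $\|\vect{\hat z}_k - \vect{\hat z}_{k-1}\|$. The consecutive $\vect{\hat z}$-difference must be traced back through \eqref{eq:z} and \eqref{eq:x} to the three quantities in $\mathbf{t}_{k-1}$ and to $\|\vect{\hat x}_{k-1}\|_2$ (the nonzero entry of $\mathbf{s}_{k-1}$), which is how the $\xi$, $\epsilon$, $y$, $\tilde y$ constants and the $\alpha c d \epsilon L \tilde y$ self-coupling arise, with the $Y_k^{-1}$-to-$Y_\infty^{-1}$ gaps again producing the $\gamma_1^{k-1}$ terms in the bottom row of $H_{k-1}$.

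I expect the main obstacle to be the third row: carefully expressing $\vect{\hat z}_k - \vect{\hat z}_{k-1}$, which mixes the diagonal scaling $Y_{k}^{-1}$ (time-varying, hence requiring both the uniform bound $\tilde y$ and the decaying bound from Lemma~\ref{lem:rgamma1}) with the full update of $\vect{\hat x}_k$ including the $-\alpha\vect{\hat w}$ term, and then collapsing every resulting piece cleanly onto the three entries of $\mathbf{t}_{k-1}$ plus $\|\vect{\hat x}_{k-1}\|_2$ without introducing any term outside the prescribed coefficient pattern. The bookkeeping that separates the genuinely contractive $\Xi$-terms from the Lipschitz/perturbation terms, and that confines all the transient $\gamma_1^{k-1}$ decay exactly into $H_{k-1}$, is where the delicate accounting lies; the delay augmentation itself is transparent here since $\Xi$ inherits column-stochasticity and the same Perron-projection structure, so Assumption~A4 enters only through the enlarged dimension $\bar n$ and the definitions of $\xi$ and $\epsilon$.
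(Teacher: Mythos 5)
Your proposal matches the paper's proof essentially step for step: row one via the contraction of Lemma~\ref{lem:sigma} plus the $\alpha\|\vect{\hat{w}}_{k-1}-Y_{\infty}\vect{\bar{g}}^{\hat{z}}_{k-1}\|$ term, row two via the averaged dynamics $\vect{\bar{x}}_{k}=\vect{\bar{x}}_{k-1}-\alpha\vect{\bar{g}}^{\hat{z}}_{k-1}$ (which the paper isolates as Lemma~\ref{lem:dynamics_initial_conditions}, your telescoping observation $\vect{\bar{w}}_k=\vect{\bar{g}}^{\hat{z}}_k$) together with the Bubeck-type contraction $\eta$ and Lipschitz/Lemma~\ref{lem:rgamma1} corrections, and row three via the $(I_{\bar{n}}-\Xi_{\infty})$ factoring and the careful expansion of $\|\vect{\hat{z}}_k-\vect{\hat{z}}_{k-1}\|_2$ yielding the $\xi$, $\epsilon$, $y$, $\tilde{y}$ coefficients and the $\gamma_1^{k-1}$ transients in $H_{k-1}$. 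Your closing remark that the delay augmentation enters only through $\bar{n}$, $\xi$, and $\epsilon$ is precisely the paper's \emph{mutatis mutandis} reduction to the argument of \cite{xi2017add}, so the approach is the same.
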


Next, we show that given an appropriate selection of the gradient step-size, the R-ADD-OPT algorithm is guaranteed to converge to the optimal solution of the problem in \eqref{eq:optimization_problem}, regardless the length of time-invariant delays on the transmission links. It is worth mentioning that, the range of the gradient step-size that can guarantee convergence, can be computed \emph{a~priori} based on the maximum fixed transmission delay in the network, given that it is bounded. The step-size range that guarantees the convergence of the R-ADD-OPT algorithm to the optimal solution, is given by Lemma~\ref{lem:step-size-range}, and its proof is available in the Appendix.

\begin{lem}[Gradient step-size for convergence, Xi \etal \cite{xi2017add}]\label{lem:step-size-range}
Consider the matrix $G_{\alpha}$ being a function of the step-size $\alpha$, as defined in \eqref{eq:matrices}. It is worth mentioning that every element in $G_{\alpha}$, other than the step-size $\alpha$, is constant. Then, it follows that $\rho\left(G_{\alpha}\right)<1$ if the step-size $\alpha \in(0, \bar{\alpha})$, where
\begin{align}\label{eq:alpha}
\bar{\alpha}=\min \left\{\frac{\sqrt{\delta^{2}+4 \bar{n} \mu(1-\sigma)^{2} \theta}-\delta}{2 \theta}, \frac{1}{\bar{n} L}\right\},
\end{align}
and $\delta=\bar{n} \mu c d \epsilon L \tilde{y}(1-\sigma+\xi)$, $\theta=c d \epsilon L^2 y \tilde{y}^2 (L+\bar{n} \mu)$, while $c$ and $d$ are the constants from the equivalence of $\|\cdot\|$ defined in Lemma~\ref{lem:sigma} and $\|\cdot\|_2$.
\end{lem}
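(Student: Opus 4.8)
The plan is to exploit the nonnegativity of $G_{\alpha}$ and to characterize $\rho(G_{\alpha})<1$ through M-matrix theory. For a nonnegative matrix $G_{\alpha}$, the matrix $I_{3}-G_{\alpha}$ is a Z-matrix, and one has $\rho(G_{\alpha})<1$ if and only if $I_{3}-G_{\alpha}$ is a nonsingular M-matrix, which is in turn equivalent to every leading principal minor of $I_{3}-G_{\alpha}$ being strictly positive. I would therefore reduce the spectral claim to three scalar inequalities, one per leading principal minor, and read off the admissible range of $\alpha$ from them.

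First I would write $I_{3}-G_{\alpha}$ explicitly and record its leading principal minors $M_{1},M_{2},M_{3}$. The first minor $M_{1}=1-\sigma$ is positive unconditionally, since $0<\sigma<1$ by Lemma~\ref{lem:xin2019}. For $M_{2}$ I would note that strong convexity and Lipschitz continuity always give $\mu\le L$, so restricting $\alpha\le 1/(\bar{n}L)$ forces $1-\bar{n}\alpha L\ge 0$ and hence $\zeta=1-\bar{n}\alpha L\le 1-\bar{n}\alpha\mu=\chi$, yielding $\eta=\max(\chi,\zeta)=1-\bar{n}\alpha\mu$. Consequently $M_{2}=(1-\sigma)(1-\eta)=(1-\sigma)\bar{n}\alpha\mu>0$ for every $\alpha>0$. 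This isolates the role of the $1/(\bar{n}L)$ term inside $\bar{\alpha}$: it is exactly what lets me drop the absolute values and keep $\eta<1$.

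The crux is the third minor $M_{3}=\det(I_{3}-G_{\alpha})$. Expanding along the first column and substituting $1-\eta=\bar{n}\alpha\mu$, I expect the terms to collect into $M_{3}=\alpha\bigl[\bar{n}\mu(1-\sigma)^{2}-\delta\,\alpha-\theta\,\alpha^{2}\bigr]$, where the linear and quadratic coefficients reproduce precisely $\delta=\bar{n}\mu c d\epsilon L\tilde{y}(1-\sigma+\xi)$ and $\theta=c d\epsilon L^{2}y\tilde{y}^{2}(L+\bar{n}\mu)$. Since $\alpha>0$, the sign of $M_{3}$ is that of the bracketed expression, which is a downward-opening parabola in $\alpha$ equal to $\bar{n}\mu(1-\sigma)^{2}>0$ at $\alpha=0$; it therefore stays positive up to its unique positive root, so $M_{3}>0$ holds exactly for $\alpha<(\sqrt{\delta^{2}+4\bar{n}\mu(1-\sigma)^{2}\theta}-\delta)/(2\theta)$.

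Combining the three conditions, $M_{1}>0$ is free, $M_{2}>0$ requires $\alpha\le 1/(\bar{n}L)$, and $M_{3}>0$ requires $\alpha$ below the parabola's positive root; taking $\bar{\alpha}$ to be the minimum of these two thresholds, as in \eqref{eq:alpha}, makes all leading principal minors positive simultaneously, hence $\rho(G_{\alpha})<1$ on $(0,\bar{\alpha})$. The main obstacle I anticipate is the bookkeeping in the determinant expansion: after the substitution $1-\eta=\bar{n}\alpha\mu$ one must track the signs of the off-diagonal entries so that the cross terms combine correctly and then match the resulting algebraic coefficients against the definitions of $\delta$ and $\theta$. Everything else is routine once the M-matrix characterization is in place.
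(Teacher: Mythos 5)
Your proposal is correct and takes essentially the same route as the paper, whose own ``proof'' is a one-line deferral to the ADD-OPT analysis of Xi \emph{et al.} with $n$ replaced by $\bar{n}$: there, exactly as in your argument, one restricts $\alpha \le 1/(\bar{n}L)$ so that $\eta = 1-\bar{n}\alpha\mu$ (using $\mu \le L$) and then extracts the bound from positivity of the minors of $I_3 - G_{\alpha}$, which is legitimate here precisely because $I_3-G_{\alpha}$ is a Z-matrix, as you note. I verified the bookkeeping you flagged as the main risk: expanding $\det(I_3-G_{\alpha})$ and substituting $1-\eta=\bar{n}\alpha\mu$ indeed yields $\alpha\left[\bar{n}\mu(1-\sigma)^2-\delta\alpha-\theta\alpha^2\right]$ with exactly the stated $\delta$ and $\theta$, whose unique positive root is the first term of \eqref{eq:alpha}, so your derivation fully recovers the lemma (and supplies the details the paper outsources to the citation).
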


\begin{thm}
Consider the R-ADD-OPT algorithm in \eqref{eq:r-add-opt} and let assumptions A1-A4 hold. The R-ADD-OPT algorithm converges exponentially for an appropriate gradient step-size $\alpha \in (0,\bar{\alpha})$, where $\bar{\alpha}$ is given in \eqref{eq:alpha}, which depends on the upper bound of the delays in the network.
\end{thm}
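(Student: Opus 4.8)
The plan is to show that the three–dimensional error vector $\mathbf{t}_{k}$ collected in \eqref{eq:matrices} decays geometrically, which is equivalent to exponential convergence of every local iterate $z_{k}^{j}$ to $z^{*}$: the first and third entries measure how far the augmented states $\mathbf{\hat{x}}_{k}$ and gradient trackers $\mathbf{\hat{w}}_{k}$ are from their consensual (Perron-scaled) values, while the second entry measures the optimality gap of the network average. The backbone is the comparison inequality $\mathbf{t}_{k}\le G\,\mathbf{t}_{k-1}+H_{k-1}\mathbf{s}_{k-1}$ of Lemma~\ref{lem:Xi_linear_relation}, which already encodes the contraction of the augmented column-stochastic matrix $\Xi$ (via Lemma~\ref{lem:sigma} and Lemma~\ref{lem:xin2019}) together with the gradient-tracking dynamics of \eqref{eq:r-add-opt-matrix-form}. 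First I would unroll this recursion into
\begin{equation*}
\mathbf{t}_{k}\le G^{k}\mathbf{t}_{0}+\sum_{l=0}^{k-1} G^{\,k-1-l} H_{l}\,\mathbf{s}_{l},
\end{equation*}
thereby reducing the theorem to estimating the two terms on the right-hand side.

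For the homogeneous part I would invoke Lemma~\ref{lem:step-size-range}: selecting $\alpha\in(0,\bar\alpha)$ with $\bar\alpha$ given in \eqref{eq:alpha} guarantees $\rho(G)<1$, so that for any $\lambda$ with $\rho(G)<\lambda<1$ there is a constant $\Gamma$ with $\|G^{k}\|\le\Gamma\lambda^{k}$, and hence $G^{k}\mathbf{t}_{0}$ vanishes geometrically. For the inhomogeneous part, the structure of $H_{k}$ in \eqref{eq:matrices} is decisive: its only nonzero entries carry the factor $\psi\gamma_{1}^{k-1}$ supplied by Lemma~\ref{lem:rgamma1}, so that $H_{k}\to 0$ geometrically at rate $\gamma_{1}<1$, reflecting the exponentially fast convergence $Y_{k}\to Y_{\infty}$.

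The hard part will be controlling $\mathbf{s}_{l}=[\,\|\mathbf{\hat{x}}_{l}\|_{2},\,0,\,0\,]^{\top}$, since $\|\mathbf{\hat{x}}_{l}\|_{2}$ is not \emph{a priori} bounded and is itself coupled back to $\mathbf{t}_{l}$. I would break this circularity by bounding, through the triangle inequality and the norm-equivalence constants $c,d$,
\begin{equation*}
\|\mathbf{\hat{x}}_{l}\|_{2}\le c\,\big\|\mathbf{\hat{x}}_{l}-Y_{\infty}\bar{\mathbf{x}}_{l}\big\| + \|Y_{\infty}\|_{2}\big(\|\bar{\mathbf{x}}_{l}-\hat{\mathbf{z}}^{*}\|_{2}+\|\hat{\mathbf{z}}^{*}\|_{2}\big),
\end{equation*}
so that $\mathbf{s}_{l}\le A\,\mathbf{t}_{l}+\mathbf{b}$ for a fixed nonnegative matrix $A$ and a constant vector $\mathbf{b}$ built from $\|Y_{\infty}\|_{2}$ and $\|\hat{\mathbf{z}}^{*}\|_{2}$. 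Substituting this into the recursion yields $\mathbf{t}_{k}\le(G+H_{k-1}A)\,\mathbf{t}_{k-1}+H_{k-1}\mathbf{b}$, a time-varying linear iteration whose transition matrix converges to $G$ and whose forcing term decays geometrically.

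To finish, I would argue in two stages. Because $H_{k-1}\to 0$ while $\rho(G)<1$, there is a finite $k_{0}$ beyond which $\|G+H_{k-1}A\|\le\lambda'<1$ in the norm realizing $\rho(G)$; on this tail the iteration is a genuine contraction driven by the summable input $H_{k-1}\mathbf{b}$. This first forces uniform boundedness of $\{\mathbf{t}_{k}\}$ (hence of $\|\mathbf{\hat{x}}_{k}\|_{2}$), and then, by the standard estimate for a contraction subject to an exponentially vanishing forcing term, geometric decay of $\mathbf{t}_{k}$ at the combined rate $\max\{\lambda,\gamma_{1}\}<1$. Tracking the convolution of the two geometric rates $\rho(G)$ and $\gamma_{1}$ through the sum is the only delicate bookkeeping; it produces the $\mathcal{O}(\mu^{k})$ envelope reported in Table~\ref{tab:distributed_opt_algorithms} and, through $\mathbf{t}_{k}\to\mathbf{0}$, the claimed exponential convergence of all nodes to $z^{*}$ for every fixed bounded delay profile with maximum delay $\bar\tau$.
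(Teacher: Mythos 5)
Your proposal is correct, and its skeleton coincides with the paper's: both reduce the theorem to (i) $\rho(G)<1$ for $\alpha\in(0,\bar{\alpha})$, supplied by Lemma~\ref{lem:step-size-range}, and (ii) geometric decay of $H_k$ at rate $\gamma_1<1$, supplied by Lemma~\ref{lem:rgamma1}. The difference is what happens afterwards: the paper's proof is a two-bullet sketch that delegates the implication ``(i) and (ii) force $\mathbf{t}_k\to\mathbf{0}$ linearly'' entirely to the citation \cite[Lemma~5]{xi2017add}, and in particular never confronts the point you correctly flag as the hard part, namely that the forcing term $\mathbf{s}_k=\left[\|\hat{\mathbf{x}}_k\|_2,\,0,\,0\right]^{\top}$ in \eqref{eq:linear_relation} is not \emph{a priori} bounded and feeds back into $\mathbf{t}_k$. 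Your affine closure $\mathbf{s}_l\leq A\,\mathbf{t}_l+\mathbf{b}$ (a valid consequence of the triangle inequality and norm equivalence), the perturbed recursion $\mathbf{t}_k\leq(G+H_{k-1}A)\,\mathbf{t}_{k-1}+H_{k-1}\mathbf{b}$, and the two-stage argument (eventual contraction once $H_{k-1}A$ is small, then geometric decay at rate $\max\{\lambda,\gamma_1\}$) reconstruct precisely the content of the lemma the paper imports from \cite{xi2017add}; so your route buys self-containedness at the cost of length, while the paper buys brevity at the cost of an external dependence. Two small refinements would make your argument airtight: first, since \eqref{eq:linear_relation} is an \emph{entrywise} inequality between nonnegative vectors, the norm in which you contract should be monotone --- e.g., a weighted $\ell_\infty$ norm with positive weights, which for the nonnegative matrix $G$ can be chosen so that $\|G\|\leq\rho(G)+\varepsilon<1$ --- so that entrywise domination actually transfers to norms; second, in the degenerate case $\lambda=\gamma_1$ the convolution of the two geometric series produces an extra factor $k$, i.e., an $\mathcal{O}(k\lambda^k)$ envelope, which is still dominated by $\mathcal{O}(\mu^k)$ for any $\mu\in(\lambda,1)$ and hence does not affect the claimed exponential convergence.
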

\begin{proof}
In order to prove the linear convergence of R-ADD-OPT, it is sufficient to show that $\|\mathbf t_k \|$ in \eqref{eq:linear_relation} goes to zero exponentially. To this aim, it is sufficient to show that:
\begin{itemize}
    \item The spectral radius of matrix $G$ is less than 1, \ie $\rho(G)<1$, that is the largest absolute value of the eigenvalues of $G$ is less than 1. Following the results of Lemma~\ref{lem:step-size-range}, we can guarantee that $\rho(G)<1$ if the step-size is within the range $\alpha \in (0,\bar{\alpha})$, where $\bar{\alpha}$ is given in \eqref{eq:alpha}.
    \item The term $H_k$ decays in linear fashion. This can be shown since $0<\gamma_1<1$, and hence according to \cite[Lemma~5]{xi2017add}, $H_k$ decays with $k$. 
\end{itemize}
\end{proof}

\section{Numerical Example}
Consider a network of $5$ agents (nodes), represented by the digraph shown in digraph in Fig.~\ref{fig:numerical_example_graph}. In this example, the main goal of the nodes is to allocate their resources in order to minimize a global cost function of the form of \eqref{eq:optimization_problem}, where each node is endowed with a scalar local cost function $f_i(x): \mathbb{R} \rightarrow \mathbb{R}$. Such scalar local cost functions, within the context of the distributed resource allocation (DRA) problem, are often quadratic of the form:
\begin{align}\label{eq:local_cost}
\centering
f_i(x)=\frac{1}{2} \beta_i\left(x-\varphi_i\right)^2
\end{align}
where $\beta_i>0, \varphi_i \in \mathbb{R}$ is the demand in node $v_i$ (and in our case is a positive real number), and $x$ is a global optimization parameter that will determine the resource allocation at each node. The optimal allocation for problem \eqref{eq:optimization_problem} with local quadratic cost functions as in \eqref{eq:local_cost} can be solved in closed form, and its minimizer $x^*$ is given by
\begin{align}\label{eq:minimizer}
\centering
x^*=\arg \min _{x \in \mathcal{X}} \sum_{v_i \in \mathcal{V}} f_i(x) =\frac{\sum_{v_i \in \mathcal{V}} \beta_i \varphi_i}{\sum_{v_i \in \mathcal{V}} \beta_i}
\end{align}
where $\mathcal{X}$ is the set of feasible values of $x$. Note that if $\beta_i=1$ for all $v_i \in \mathcal{V}$, the optimal solution is the average consensus value.

Each node $v_j$ chooses the weights of its out-going links, as defined in \eqref{eq:weights}, and executes the R-ADD-OPT iterations in \eqref{eq:r-add-opt}. The nodes initiate their iterations at $\vect{x}_0=[4~1~5~2~3]^{\top}$, $\vect{y}_0=[1~1~1~1~1]^{\top}$, and $\boldsymbol{\beta}_0=[1~5~3~4~1]^{\top}$. 
The maximum delay in the network is $\bar{\tau} = \max\{\bar{\tau}^{ji}\}$. In this setting, the range on the step-size that guarantees convergence can be computed by setting $c=d=L=1$, $y=1.67$, $\tilde{y}=3$, $\mu=0.1$, $\epsilon=1.1$, and $\xi=1.13$ to \eqref{eq:alpha}.
\vspace{-15pt}
\begin{figure}[ht]
\resizebox{\columnwidth}{!}{\includegraphics{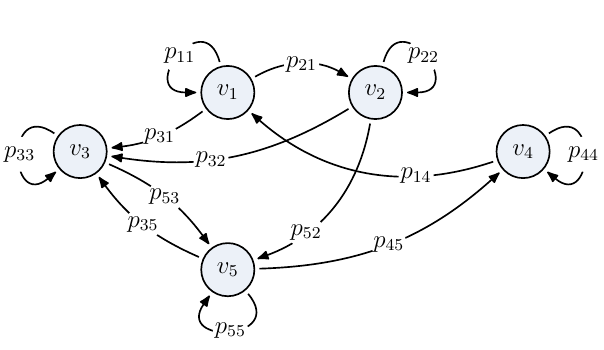}}
\caption{Digraph consisting of five agents.}
\label{fig:numerical_example_graph}
\end{figure}

To emphasize the influence of delays on the convergence speed of the R-ADD-OPT algorithm, we compute $\sigma$ as given in Lemma~\ref{lem:xin2019} for different upper bound on the delays. As shown in Table~\ref{tab:sigma_vs_delays}, with longer delays on the links, $\sigma$ approaches one as the second largest eigenvalue of $\Xi$ approaches one, while for the non-delayed case, the $\sigma$ parameter is the second largest eigenvalue of $P$. 


\begin{table}[ht]
\centering
\caption{Parameter $\sigma$ for different upper bound on the delays.}\vspace{3pt}
\noindent \renewcommand{\arraystretch}{1}
\begin{tabular}{c||c|c|c|c}  
    \cline{1-4}
    \hline
    \hline
    $\bar{\tau}$ & 0 & 2 & 5 & 10 \\
    \hline
    $\sigma$ &  0.599 & 0.877 & 0.963 &  0.987\\
    \hline
    \hline
\end{tabular}\label{tab:sigma_vs_delays}
\end{table}

Fig.~\ref{fig:spectral_radius_alpha} depicts the variation of spectral radius of matrix $G_a$, \ie $\rho(G_a)$, with respect to step-size, for different lengths of maximum delay in the network. In particular, Fig.~\ref{fig:spectral_radius_alpha} indicates the validity of the theoretical bounds on the step-sizes. Additionally, we can observe that we can achieve the best convergence rate be selecting the step-size that minimizes the spectral radius. Intuitively we can infer that for larger delays the gradient step-size should be smaller to guarantee convergence to the optimal value.

\begin{figure}[!h]
\resizebox{\columnwidth}{!}{\includegraphics{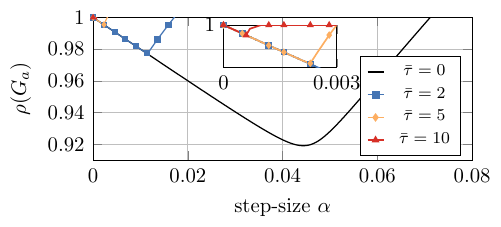}}
\caption{Spectral radius with alpha step-size value for different delays.}
\vspace{-5pt}
\label{fig:spectral_radius_alpha}
\end{figure}




To get intuition about the convergence rate of the R-ADD-OPT algorithm with respect to different lengths of delays, we plot the residual $\frac{1}{n} \sum_{i=1}^n\left(z_k^i-x^*\right)^2$, for different upper bounds on the delays. In this example we set the step-size on the value that minimizes the spectral radius as shown in Fig.~\ref{fig:spectral_radius_alpha}, and we examine both time-invariant and time-varying delays, assuming that all available (and non-self loops) links are delay-prone. In particular, for time-invariant delays, the information is delayed by exactly $\bar{\tau}$ on all delay-prone links, while for time-varying delays, the delay length is random and within $0\leq \tau_{k}^{ji} \leq \bar{\tau}^{\star}$. As shown in Fig.~\ref{fig:residual_delays_fixed_delays}, the convergence rate of the R-ADD-OPT algorithm to the optimal solution depends on the length of the delays, and the step-size should be set at lower values for longer delays on links such that the convergence is guaranteed (see Fig.~\ref{fig:spectral_radius_alpha}). Despite showing a single ensemble for the time-varying delays in Fig.~\ref{fig:residual_delays_fixed_delays}, we ran simulations for several different realizations for random integer variable $\tau_{k}^{ji}$, satisfying $0\leq \tau_{k}^{ji} \leq \bar{\tau}^{\star}$. It is conjectured that the R-ADD-OPT algorithm works also for bounded time-varying delays, as long as the step-size is within a range specified by the delay upper bound $\bar{\tau}^{\star}$. However, the proof for this range, while it is possibly a conservative bound, remains an open problem.


\begin{figure}[!h]
\resizebox{\columnwidth}{!}{
\includegraphics{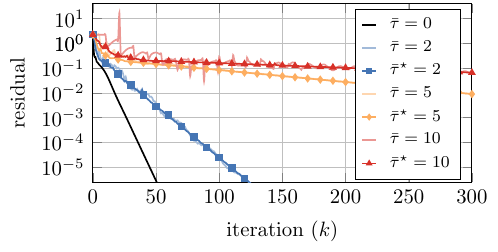}}
\caption{Residual for different upper bound on the delays in the network $\bar{\tau}$. Time-varying delays are within the bound $0\leq \tau_{k}^{ji} \leq \bar{\tau}^{\star}$.}
\label{fig:residual_delays_fixed_delays}
\end{figure}

\section{Conclusions}
In this paper, 
we proposed a modified version of the ADD-OPT algorithm, called R-ADD-OPT, that handles heterogeneous transmission delays over unidirectional networks. We showed analytically that if the gradient step-size of the R-ADD-OPT algorithm is within a certain predefined range, the agents are guaranteed to converge to the optimal solution of the distributed optimization problem, regardless the length of delays on the transmission links. The range of the step-size that can guarantee convergence, can be computed \emph{a~priori} based on the maximum transmission delay in the network, given that it is bounded. Finally, we showed with experimental simulations the potential of the R-ADD-OPT algorithm to converge to the global optimal value in the presence of time-varying delays on the communication links.

\clearpage
\section*{Appendix}
\setlength{\abovedisplayskip}{5pt}
\setlength{\belowdisplayskip}{5pt}


\begin{proof}[\textbf{Proof of Lemma~\ref{lem:sigma}}]
First, from Perron-Frobenius Theorem we know that $\rho(\Xi)=1$ since $\Xi$ is column-stochastic and irreducible. Every eigenvalue of $\Xi$ other than the largest in magnitude, \ie $|\lambda_1|$, is strictly less than $\rho(\Xi)$. Moreover, since $\Xi$ is primitive, we know that $\boldsymbol{\pi}$ is a strictly positive (right) eigenvector corresponding to the eigenvalue of 1 such that $\mathbf{1}_{\bar{n}}^{\top} \boldsymbol{\pi}=1$, and hence $\Xi_{\infty} = \lim _{k \rightarrow \infty} \Xi^k=\boldsymbol{\pi} \mathbf{1}_{\bar{n}}^{\top}$. One can also verify that $\Xi \Xi_{\infty} = \Xi_{\infty} \Xi_{\infty} = \Xi_{\infty}$. Then it follows that:%
\begin{align}
\Xi \mathbf{a} - Y_{\infty} \bar{\mathbf{a}} &= \Xi \mathbf{a} - \Xi_{\infty} \mathbf{a}\nonumber\\
&= \Xi \mathbf{a} - \Xi_{\infty} \mathbf{a} + \Xi_{\infty} \Xi_{\infty} Y_{\infty} \bar{\mathbf{a}} -  \Xi \Xi_{\infty} Y_{\infty} \bar{\mathbf{a}}\nonumber\\
&= (\Xi - \Xi_{\infty})(\mathbf{a} - Y_{\infty} \bar{\mathbf{a}}).\nonumber
\end{align}%
Therefore, considering the right eigenvector $\boldsymbol{\pi}$ of $\Xi$, we get:
\begin{align}
\left\| \Xi \mathbf{a} - Y_{\infty} \bar{\mathbf{a}}  \right\|_{\boldsymbol{\pi}} &= \left\| (\Xi - \Xi_{\infty})(\mathbf{a} - Y_{\infty} \bar{\mathbf{a}}) \right\|_{\boldsymbol{\pi}} \nonumber\\
&\overset{\tiny{\mathrm{(i)}}}{\leq} \left\| \Xi - \Xi_{\infty} \right\|_{\boldsymbol{\pi}} \left\| \mathbf{a} - Y_{\infty} \bar{\mathbf{a}} \right\|_{\boldsymbol{\pi}} \nonumber \\
&\overset{\tiny{\mathrm{(ii)}}}{\leq} \sigma \left\| \mathbf{a} - Y_{\infty} \bar{\mathbf{a}} \right\|_{\boldsymbol{\pi}},\nonumber
\end{align}
where $(i)$ stems from the sub-multiplicative matrix norm property and $(ii)$ stems from Lemma~\ref{lem:xin2019}.
\end{proof}

\begin{proof}[\textbf{Proof of Lemma~\ref{lem:Xi_linear_relation}}]
The proof follows the results of \cite{xi2017add} \emph{mutatis mutandis} by considering the augmented matrix-vector form of the R-ADD-OPT algorithm in \eqref{eq:r-add-opt-matrix-form}, which we include here for completeness. Let assumptions A1-A4 hold.
For the first element $\left\|\vect{\hat{x}}_{k}-Y_{\infty} \vect{\hat{x}}_{k}\right\|$, we first substitute $\vect{\hat{x}}_k$ from \eqref{eq:x}, and $\vect{\bar{x}}_k$ from Lemma~\ref{lem:dynamics_initial_conditions}, and we get:
\begin{align}
\left\|\vect{\hat{x}}_{k}-Y_{\infty} \vect{\bar{x}}_{k}\right\| &= \left\| \Xi \vect{\hat{x}}_{k-1} - \alpha \vect{\hat{w}}_{k-1} - Y_{\infty} (\vect{\bar{x}}_{k-1}- \alpha \vect{\bar{g}}^{\hat{z}}_{k-1})\right\| \nonumber\\
&\leq \left\|\Xi \vect{\hat{x}}_{k-1}-Y_{\infty} \vect{\bar{x}}_{k-1}\right\| \nonumber \\ 
&\quad + \alpha\left\|\vect{\hat{w}}_{k-1}-Y_{\infty} \vect{\bar{g}}^{\hat{z}}_{k-1}\right\|\nonumber\\
&\leq  \sigma\left\|\vect{\hat{x}}_{k-1}-Y_{\infty} \vect{\bar{x}}_{k-1}\right\| \nonumber \\ & \quad  +\alpha\left\|\vect{\hat{w}}_{k-1}-Y_{\infty} \vect{\bar{g}}^{\hat{z}}_{k-1}\right\|,
\end{align}%
where the last inequality holds by Lemma~\ref{lem:sigma}.

For the second element, $\left\|\vect{\bar{x}}_k - \vect{\hat{z}}^{*} \right\|_2$, we first substitute $\vect{\bar{x}}_k$ from Lemma~\ref{lem:dynamics_initial_conditions}, and we get: 
\begin{align}\label{eq:bound_step2}
    \left\|\vect{\bar{x}}_k - \vect{\hat{z}}^{*} \right\| &= \left\|\vect{\bar{x}}_{k-1} - \alpha \vect{\bar{g}}^{\hat{z}}_{k-1} - \vect{\hat{z}}^{*} \right\|_2 \nonumber \\
    &= \left\|\vect{x}_+ - \vect{\hat{z}}^{*} - \alpha  \vect{\bar{g}}^{\hat{z}}_{k-1} + \alpha \vect{\bar{g}}^{\bar{x}}_{k-1} \right\|_2\nonumber\\
    &\leq \eta \left\|\vect{\bar{x}}_{k-1} - \vect{\hat{z}}^{*} \right\|_2 + \alpha \left\|\vect{\bar{g}}^{\hat{z}}_{k-1} - \vect{\bar{g}}^{\bar{x}}_{k-1} \right\|_2\nonumber \\
    &\leq \eta \left\| \vect{\bar{x}}_{k-1} - \vect{\hat{z}}^{*} \right\|_2 + \alpha L \left\| \vect{\hat{z}}_{k-1} - \vect{\bar{x}}_{k-1} \right\|_2\nonumber\\
    &\leq \eta \left\| \vect{\bar{x}}_{k-1} - \vect{\hat{z}}^{*} \right\|_2 + \alpha L \left\|Y_{k-1}^{-1}\vect{\hat{x}}_{k-1} - \vect{\bar{x}}_{k-1}\right\|_2\nonumber\\
    &\leq \eta \left\| \vect{\bar{x}}_{k-1} - \vect{\hat{z}}^{*} \right\|_2 \nonumber\\
    &\quad + \alpha L \left\|Y_{k-1}^{-1}\left(\vect{\hat{x}}_{k-1}-Y_{\infty} \vect{\bar{x}}_{k-1}\right)\right\|_2\nonumber\\ &\quad+ \alpha L \left\|\left(Y_{k-1}^{-1} Y_{\infty}-I_{\bar{n}}\right) \vect{\bar{x}}_{k-1}\right\|_2\nonumber\\
    &\quad + \alpha L \tilde{y} \psi \gamma_{1}^{k-1}\left\|\vect{\bar{x}}_{k-1}\right\|_2,
\end{align}
where the second equality holds by letting $\vect{x}_+ = \vect{\bar{x}}_{k-1} - \alpha \vect{\bar{g}}^{\hat{x}}_{k-1}$, the first inequality holds by applying Lemma~\ref{lem:bubeck}, and the second inequality by making use of the Lipschitz-continuity assumption from A3. Then, the third inequality is obtained by substituting $\vect{\hat{z}}_{k-1}$ from \eqref{eq:z}, and finally, we get the last inequality by applying \cite[Lemma~8]{xi2017add}.

For the third element, $\left\|\vect{\hat{w}}_k-Y_{\infty} \vect{\bar{g}}^{\hat{z}}_k\right\|$, we substitute \eqref{eq:w}, and we get:
\begin{align}\label{eq:bound_step3}
    &\left\|\mathbf{\hat{w}}_{k}-Y_{\infty} \vect{\bar{g}}^{\hat{z}}_k\right\|= \left\|\Xi \mathbf{\hat{w}}_{k-1} + \nabla \vect{\hat{f}}(\vect{\hat{z}}_{k}) - \nabla \vect{\hat{f}}(\vect{\hat{z}}_{k-1}) -Y_{\infty} \vect{\bar{g}}^{\hat{z}}_k\right\|\nonumber\\
    &\leq \left\|\Xi \mathbf{\hat{w}}_{k-1} - Y_{\infty} \vect{\bar{g}}^{\hat{z}}_{k-1}\right\| \nonumber\\
    &\quad + \left\| \left(\nabla \vect{\hat{f}}(\vect{\hat{z}}_{k}) - \nabla \vect{\hat{f}}(\vect{\hat{z}}_{k-1})\right) - \left(Y_{\infty} \vect{\bar{g}}^{\hat{z}}_k - Y_{\infty} \vect{\bar{g}}^{\hat{z}}_{k-1}\right) \right\|\nonumber\\
    &\leq \sigma \left\|\mathbf{\hat{w}}_{k-1} - Y_{\infty} \mathbf{\bar{w}}_{k-1} \right\| \nonumber\\
    &\quad + \left\| \left(\nabla \vect{\hat{f}}(\vect{\hat{z}}_{k}) - \nabla \vect{\hat{f}}(\vect{\hat{z}}_{k-1})\right) - \left(Y_{\infty} \vect{\bar{g}}^{\hat{z}}_k - Y_{\infty} \vect{\bar{g}}^{\hat{z}}_{k-1}\right) \right\|\nonumber\\
    &\leq \sigma \left\|\mathbf{\hat{w}}_{k-1} - Y_{\infty} \mathbf{\bar{w}}_{k-1} \right\| \nonumber\\
    &\quad + \left\| \left(I_{\bar{n}} - \Xi_{\infty}\right) \left(\nabla \vect{\hat{f}}(\vect{\hat{z}}_{k}) - \nabla \vect{\hat{f}}(\vect{\hat{z}}_{k-1}\right) \right\|_2\nonumber\\ 
    &\leq \sigma \left\|\mathbf{\hat{w}}_{k-1} - Y_{\infty} \mathbf{\bar{w}}_{k-1} \right\| + d \epsilon L \left\| \vect{\hat{z}}_{k} - \vect{\hat{z}}_{k-1} \right\|_2
\end{align}
where the second inequality comes from Lemma~\ref{lem:dynamics_initial_conditions} and Lemma~\ref{lem:sigma}. Then, since $\frac{1}{\bar{n}} Y_{\infty} \mathbf{1}_{\bar{n}} \mathbf{1}_{\bar{n}}^{\top}=\Xi_{\infty}$, we get the third inequality, while by using the Lipschitz-continuity assumption A3, we obtain the last inequality.
Now we need to bound $\left\|\mathbf{\hat{z}}_{k}-\mathbf{\hat{z}}_{k-1}\right\|_2$:
\begin{align}
\left\|\vect{\hat{z}}_{k}-\vect{\hat{z}}_{k-1}\right\|_2 \leq &\left\|Y_{k}^{-1}\left(\vect{\hat{x}}_{k}-\vect{\hat{x}}_{k-1}\right)\right\|_2 \nonumber\\
&+\left\|\left(Y_{k}^{-1}-Y_{k-1}^{-1}\right) \vect{\hat{x}}_{k-1}\right\|_2 \nonumber\\
\leq &\left\|Y_{k}^{-1}\left(\Xi-I_{\bar{n}}\right) \vect{\hat{x}}_{k-1}\right\|_2+\alpha\left\|Y_{k}^{-1} \vect{\hat{w}}_{k-1}\right\|_2 \nonumber\\
&+\left\|Y_{k}^{-1}-Y_{k-1}^{-1}\right\|_2\left\|\vect{\hat{x}}_{k-1}\right\|_2 \nonumber\\
\leq &\left(\tilde{y} \xi+\alpha \tilde{y}^{2} y L\right)\left\|\vect{\hat{x}}_{k-1}-Y_{\infty} \vect{\bar{x}}_{k-1}\right\|_2 \nonumber\\
&+\alpha \tilde{y}\left\|\vect{w}_{k-1}-Y_{\infty} \vect{\bar{g}}^{\hat{z}}_{k-1}\right\|_2 \nonumber\\
&+\alpha \tilde{y} y L\left\|\vect{\bar{x}}_{k-1}-\vect{\hat{z}}^{*}\right\|_2 \nonumber\\
&+(\alpha y L+2) \tilde{y}^{2} \psi \gamma_{1}^{k-1}\left\|\vect{\hat{x}}_{k-1}\right\|_2,\label{eq:z-zprev}
\end{align}
where the first inequality is obtained by substituting \eqref{eq:z}, second inequality by substituting \eqref{eq:x}, third inequality holds due to $\left(\Xi-I_{\bar{n}}\right) Y_{\infty} \vect{\hat{x}}_{k-1}=\mathbf{0}_{\bar{n}}$, and:
\begin{align}
    \left\| \vect{\bar{g}}^{\bar{x}}_{k-1} \right\| = 
    \left\| \frac{1}{\bar{n}} \mathbf{1}_{\bar{n}} \mathbf{1}_{\bar{n}}^{\top} \nabla \vect{\hat{f}}(\vect{\bar{x}}_{k-1}) \right\|_2 \leq L\left\|\vect{\bar{x}}_{k-1}-\vect{\hat{z}}^{*}\right\|,
\end{align}
which bounds the following:
\begin{align}
\left\|Y_{k}^{-1} \mathbf{\hat{w}}_{k-1}\right\| \leq &\left\|Y_{k}^{-1}\left(\mathbf{\hat{w}}_{k-1}-Y_{\infty} \vect{\bar{g}}^{\hat{z}}_{k-1}\right)\right\|_2 \nonumber\\
&+\left\|Y_{k}^{-1} Y_{\infty} \vect{\bar{g}}^{\bar{x}}_{k-1}\right\|_2 \nonumber\\
&+\left\|Y_{k}^{-1} Y_{\infty}\left(\vect{\bar{g}}^{\hat{z}}_{k-1}-\vect{\bar{g}}^{\bar{x}}_{k-1}\right)\right\|_2 \nonumber\\
\leq & \tilde{y}\left\|\mathbf{\hat{w}}_{k-1}-Y_{\infty} \vect{\bar{g}}^{\hat{z}}_{k-1}\right\|_2 \nonumber\\
&+\tilde{y} y L\left\|\vect{\bar{x}}_{k-1}-\vect{\hat{z}}^{*}\right\|_2 \nonumber\\
&+\tilde{y} y L\left\|\vect{\hat{z}}_{k-1}-\vect{\bar{x}}_{k-1}\right\|_2 \nonumber\\
\leq & \tilde{y}\left\|\mathbf{\hat{w}}_{k-1}-Y_{\infty} \vect{\bar{g}}^{\hat{z}}_{k-1}\right\|_2 \nonumber\\
&+\tilde{y} y L\left\|\vect{\bar{x}}_{k-1}-\vect{\hat{z}}^{*}\right\|_2 \nonumber\\
&+\tilde{y}^{2} y L\left\|\vect{\hat{x}}_{k-1}-Y_{\infty} \vect{\bar{x}}_{k-1}\right\|_2 \nonumber\\
&+\tilde{y}^{2} y L \psi \gamma_{1}^{k-1}\left\|\vect{\hat{x}}_{k-1}\right\|_2,
\end{align}
where the last inequality holds due to the bound of $\left\| \vect{\hat{z}}_{k-1} - \vect{\bar{x}}_{k-1} \right\|$ obtained from the last inequality of \eqref{eq:bound_step2}.
Finally, by substituting \eqref{eq:z-zprev} in \eqref{eq:bound_step3} we get the bound on the third element, given by:
\begin{align}
\!\!\!\!  \left\|\vect{\hat{w}}_{k}-Y_{\infty} \vect{\bar{g}}^{\hat{z}}_k\right\| \leq& \left(c d \epsilon L \xi \tilde{y}+ \kappa \right)\left\|\vect{\hat{x}}_{k-1}-Y_{\infty} \vect{\bar{x}}_{k-1}\right\| \nonumber\\
&+\alpha d \epsilon L^{2} y \tilde{y}\left\|\vect{\bar{x}}_{k-1}-\vect{\hat{z}}^{*}\right\|_2\nonumber\\
&+\left(\sigma+\alpha c d \epsilon L \tilde{y}\right)\left\|\vect{\hat{w}}_{k-1}-Y_{\infty} \vect{\bar{g}}^{\hat{z}}_{k-1}\right\| \nonumber\\
&+(\alpha y L+2) d \epsilon L \tilde{y}^{2} \psi \gamma_{1}^{k-1}\left\|\vect{\hat{x}}_{k-1}\right\|_2,
\end{align}
where $\kappa=\alpha c d \epsilon L^{2} y \tilde{y}^{2}$.

Proof completed with all elements bounded.
\end{proof}

\begin{proof}[\textbf{Proof of Lemma~\ref{lem:step-size-range}}]
To derive the range of the gradient step-size $\alpha$ in \eqref{eq:alpha} that maintains the spectral radius $\rho(G_a)<1$, we adopt similar analysis as in \cite{xi2017add} \emph{mutatis mutandis}, by replacing $n$, with the total number of nodes in the augmented digraph $\bar{n}$.
\end{proof}

\begin{lem}[Auxiliary relation]\label{lem:dynamics_initial_conditions}
Consider the variables $\vect{\bar{x}}_{k}$ and $\vect{\bar{w}}_{k}$ from \eqref{eq:generic_notation} that incorporate the transmission delays due to the exchange of information between the nodes. The following equations hold for all $k$ :
$$ \text{(a) } \vect{\bar{w}}_{k}= \vect{\bar{g}}^{\hat{z}}_k, \text{ and (b) } \vect{\bar{x}}_{k+1}=\vect{\bar{x}}_{k}- \alpha \vect{\bar{g}}^{\hat{z}}_k.$$

\begin{proof} 
From the column-stochasticity property of $\Xi$, and \eqref{eq:w} we have that
$
\vect{\bar{w}}_{k} 
= \vect{\bar{w}}_{k-1} + \vect{\bar{g}}^{\hat{z}}_k - \nabla \vect{\bar{g}}^{\hat{z}}_{k-1}.\nonumber
$
Writing $\vect{\bar{w}}_k$ recursively we get $\vect{\bar{w}}_{k}=\vect{\bar{w}}_{0}+\vect{\bar{g}}^{\hat{z}}_k-\vect{\bar{g}}^{\hat{z}}_0$, which gives $\vect{\bar{w}}_k=\vect{\bar{g}}^{\hat{z}}_k$ due to the initial conditions in \eqref{eq:w}, and $\vect{\hat{w}}_0=\nabla \vect{\hat{f}}(\vect{\hat{z}}_0)$. From \eqref{eq:x} we have that
$
\vect{\bar{x}}_{k+1} = \frac{1}{\bar{n}} \mathbf{1}_{\bar{n}} \mathbf{1}_{\bar{n}}^{\top} \left(\Xi \hat{\vect{x}}_{k}-\alpha \vect{\hat{w}}_{k}\right)=
\vect{\bar{x}}_{k}-\alpha \vect{\bar{g}}^{\hat{z}}_k.\nonumber
$\end{proof}
\end{lem}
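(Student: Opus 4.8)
The plan is to derive both identities by left-multiplying the augmented dynamics \eqref{eq:x} and \eqref{eq:w} by the averaging operator $\tfrac{1}{\bar n}\mathbf{1}_{\bar n}\mathbf{1}_{\bar n}^\top$ and invoking the column-stochasticity of $\Xi$. The single structural fact I would use throughout is $\mathbf{1}_{\bar n}^\top \Xi=\mathbf{1}_{\bar n}^\top$, which holds because the delay augmentation in \eqref{eq:Xi}--\eqref{eq:weighted_delayed_link} redistributes each column of $P$ across the virtual-node blocks without changing its sum; consequently $\tfrac{1}{\bar n}\mathbf{1}_{\bar n}\mathbf{1}_{\bar n}^\top \Xi\,\vect{\hat{a}}=\tfrac{1}{\bar n}\mathbf{1}_{\bar n}\mathbf{1}_{\bar n}^\top\vect{\hat{a}}$ for any $\vect{\hat{a}}\in\mathbb{R}^{\bar n}$, i.e.\ the averaging projection is left-invariant under $\Xi$. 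Both claims then reduce to a telescoping argument plus a substitution.

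For part (a), I would apply the averaging operator to \eqref{eq:w}. Column-stochasticity collapses the $\Xi\vect{\hat{w}}_{k}$ term to $\vect{\bar{w}}_{k}$, and the definition of $\vect{\bar{g}}^{\hat z}_k$ in \eqref{eq:g_matrix_form} turns the two gradient differences into $\vect{\bar{g}}^{\hat z}_{k+1}-\vect{\bar{g}}^{\hat z}_{k}$, giving the one-step recursion $\vect{\bar{w}}_{k+1}=\vect{\bar{w}}_{k}+\vect{\bar{g}}^{\hat z}_{k+1}-\vect{\bar{g}}^{\hat z}_{k}$. Telescoping from $k=0$ yields $\vect{\bar{w}}_{k}=\vect{\bar{w}}_{0}+\vect{\bar{g}}^{\hat z}_{k}-\vect{\bar{g}}^{\hat z}_{0}$, and the initialization $\vect{\hat{w}}_{0}=\nabla\vect{\hat{f}}(\vect{\hat z}_0)$ makes $\vect{\bar{w}}_{0}=\vect{\bar{g}}^{\hat z}_{0}$, so the boundary terms cancel and $\vect{\bar{w}}_{k}=\vect{\bar{g}}^{\hat z}_{k}$.

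For part (b), I would apply the same averaging operator to \eqref{eq:x}: column-stochasticity sends $\Xi\vect{\hat{x}}_{k}$ to $\vect{\bar{x}}_{k}$ and the $\alpha\vect{\hat{w}}_{k}$ term to $\alpha\vect{\bar{w}}_{k}$, giving $\vect{\bar{x}}_{k+1}=\vect{\bar{x}}_{k}-\alpha\vect{\bar{w}}_{k}$; substituting part (a) then closes the identity. The argument is entirely algebraic, so the only genuine ingredient is the column-stochasticity of the augmented $\Xi$ (asserted after \eqref{eq:weighted_delayed_link}); the points to be careful about are the bookkeeping of the initial condition in the telescoping sum, and the fact that $\nabla\vect{\hat{f}}$ is supported only on the first $n$ (actual-node) coordinates, so that the virtual delay nodes contribute no gradient and the averaging of $\nabla\vect{\hat{f}}(\vect{\hat z}_k)$ is consistent with the definition of $\vect{\bar{g}}^{\hat z}_k$.
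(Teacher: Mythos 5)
Your proposal is correct and follows essentially the same route as the paper's proof: left-multiplying \eqref{eq:w} and \eqref{eq:x} by $\tfrac{1}{\bar n}\mathbf{1}_{\bar n}\mathbf{1}_{\bar n}^{\top}$, exploiting column-stochasticity of $\Xi$, telescoping the $\vect{\bar{w}}_k$ recursion against the initialization $\vect{\hat{w}}_0=\nabla\vect{\hat{f}}(\vect{\hat z}_0)$, and substituting part (a) into the averaged $\vect{\hat{x}}$-update. Your explicit justification that the delay augmentation in \eqref{eq:Xi}--\eqref{eq:weighted_delayed_link} preserves column sums is a welcome detail the paper leaves implicit, but it does not change the argument.
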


\begin{lem}[Bubeck \cite{bubeck2014convex}]\label{lem:bubeck}
Let Assumption A3 hold for the objective functions $f_{i}(\mathbf{x})$ in \eqref{eq:optimization_problem}. For any $\mathbf{x} \in \mathbb{R}^{p}$ define $\mathbf{x}_{+}=\mathbf{x}-\alpha \nabla \mathbf{f}(\mathbf{x})$, where $0<$ $\alpha<\frac{2}{\bar{n} L}$. Then
\begin{align}
\left\|\mathbf{x}_{+}-\underline{\mathbf{x}}^{*}\right\| \leq \eta\left\|\mathbf{x}-\underline{\mathbf{x}}^{*}\right\|
\end{align}
where $\eta=\max (|1-\alpha \bar{n} L|,|1-\alpha \bar{n} \mu|)$.
\end{lem}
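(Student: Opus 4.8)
The plan is to treat this as the textbook single-step contraction of gradient descent on a strongly convex function with Lipschitz-continuous gradient, applied to the aggregate objective $\mathbf{f}$ whose effective strong-convexity modulus is $\bar{n}\mu$ and whose effective gradient-Lipschitz constant is $\bar{n}L$ (the aggregate constants entering the averaged recursion, both inherited from Assumption~A3). Since $\underline{\mathbf{x}}^*$ minimizes $\mathbf{f}$, first-order optimality gives $\nabla\mathbf{f}(\underline{\mathbf{x}}^*)=\mathbf{0}$, so I would begin by writing the one-step error as
\begin{align}
\mathbf{x}_+-\underline{\mathbf{x}}^* = (\mathbf{x}-\underline{\mathbf{x}}^*) - \alpha\bigl(\nabla\mathbf{f}(\mathbf{x})-\nabla\mathbf{f}(\underline{\mathbf{x}}^*)\bigr).\nonumber
\end{align}
The whole argument then reduces to controlling how the map $\mathbf{x}\mapsto\mathbf{x}-\alpha\nabla\mathbf{f}(\mathbf{x})$ acts on the displacement $\mathbf{x}-\underline{\mathbf{x}}^*$.

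Next I would express the gradient increment through an averaged Jacobian along the segment joining $\underline{\mathbf{x}}^*$ and $\mathbf{x}$, namely $\nabla\mathbf{f}(\mathbf{x})-\nabla\mathbf{f}(\underline{\mathbf{x}}^*) = H(\mathbf{x}-\underline{\mathbf{x}}^*)$ with $H=\int_0^1 \nabla^2\mathbf{f}\bigl(\underline{\mathbf{x}}^*+t(\mathbf{x}-\underline{\mathbf{x}}^*)\bigr)\,dt$, so that $\mathbf{x}_+-\underline{\mathbf{x}}^* = (I-\alpha H)(\mathbf{x}-\underline{\mathbf{x}}^*)$. Strong convexity and gradient-Lipschitzness from A3 translate into the uniform eigenvalue bracket $\bar{n}\mu I \preceq H \preceq \bar{n}L I$, whence the spectrum of $I-\alpha H$ lies in $[1-\alpha\bar{n}L,\,1-\alpha\bar{n}\mu]$. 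Because $H$ is symmetric, its spectral norm equals its largest-magnitude eigenvalue, giving $\|I-\alpha H\|_2 = \max(|1-\alpha\bar{n}\mu|,|1-\alpha\bar{n}L|) = \eta$, and sub-multiplicativity applied to the previous display yields $\|\mathbf{x}_+-\underline{\mathbf{x}}^*\|\le\eta\|\mathbf{x}-\underline{\mathbf{x}}^*\|$. I would close by checking that the stated range $0<\alpha<2/(\bar{n}L)$ is exactly what makes $\eta<1$: the constraint forces $|1-\alpha\bar{n}L|<1$, and since $\mu\le L$ for any function that is simultaneously $\mu$-strongly convex and $L$-smooth, the same $\alpha$ also satisfies $\alpha<2/(\bar{n}\mu)$, so $|1-\alpha\bar{n}\mu|<1$ as well.

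The main obstacle is that Assumption~A3 guarantees only differentiability, not twice-differentiability, so the integrated-Hessian object $H$ need not exist pointwise and the cleanest vector argument above is not literally available. I would circumvent this in one of two equivalent ways. Since the decision variable is scalar and the averaged dynamics evolve along the consensus direction, the relevant quantities are effectively one-dimensional, and I can replace $H$ by the difference quotient $s=\bigl(\Phi'(x)-\Phi'(x^*)\bigr)/(x-x^*)$ of the aggregate scalar objective $\Phi$; strong convexity lower-bounds and gradient-Lipschitzness upper-bounds this quotient by $\bar{n}\mu$ and $\bar{n}L$ respectively, so $x_+-x^*=(1-\alpha s)(x-x^*)$ with $|1-\alpha s|\le\eta$, which is exactly the claim. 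Alternatively, in the genuine vector setting one mollifies $\mathbf{f}$ (or invokes almost-everywhere second differentiability of the Lipschitz map $\nabla\mathbf{f}$) to make the averaged-Jacobian step rigorous and then passes to the limit. Because the statement is classical and attributed to Bubeck, I would ultimately cite \cite{bubeck2014convex} for the general vector form and retain the short scalar argument above for self-containedness.
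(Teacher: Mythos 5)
The paper contains no proof of this lemma at all: it is imported verbatim from Bubeck \cite{bubeck2014convex} (with $n$ replaced by the augmented size $\bar{n}$), so your reconstruction is compared against a bare citation rather than an internal argument. Your proof is correct in substance and identifies the right aggregate constants $\bar{n}\mu$ and $\bar{n}L$: with twice differentiability, the averaged-Jacobian identity $\mathbf{x}_+-\underline{\mathbf{x}}^*=(I-\alpha H)(\mathbf{x}-\underline{\mathbf{x}}^*)$ with $\bar{n}\mu I \preceq H \preceq \bar{n}L I$ gives $\|I-\alpha H\|_2=\eta$ exactly, and you rightly flag that Assumption A3 grants only differentiability, so this argument is not literally available. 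Your scalar patch via the difference quotient $s=\bigl(\Phi'(x)-\Phi'(x^*)\bigr)/(x-x^*)\in[\bar{n}\mu,\bar{n}L]$ is airtight and suffices for this paper's setting (the decision variable is scalar). For the general $\mathbb{R}^p$ statement, however, mollification is an unnecessarily heavy fix: the standard, fully rigorous route — and essentially Bubeck's own — avoids Hessians entirely. Expand $\|\mathbf{x}_+-\underline{\mathbf{x}}^*\|^2=\|\mathbf{x}-\underline{\mathbf{x}}^*\|^2-2\alpha\langle\nabla\mathbf{f}(\mathbf{x}),\mathbf{x}-\underline{\mathbf{x}}^*\rangle+\alpha^2\|\nabla\mathbf{f}(\mathbf{x})\|^2$ and invoke the coercivity inequality for an $m$-strongly convex, $M$-smooth function ($m=\bar{n}\mu$, $M=\bar{n}L$), namely $\langle\nabla\mathbf{f}(\mathbf{x})-\nabla\mathbf{f}(\mathbf{y}),\mathbf{x}-\mathbf{y}\rangle\geq\frac{mM}{m+M}\|\mathbf{x}-\mathbf{y}\|^2+\frac{1}{m+M}\|\nabla\mathbf{f}(\mathbf{x})-\nabla\mathbf{f}(\mathbf{y})\|^2$; then for $\alpha\leq\frac{2}{m+M}$ the residual gradient term has nonpositive coefficient and the bound $\|\nabla\mathbf{f}(\mathbf{x})\|\geq m\|\mathbf{x}-\underline{\mathbf{x}}^*\|$ yields the factor $(1-\alpha m)^2$, while for $\alpha\geq\frac{2}{m+M}$ the bound $\|\nabla\mathbf{f}(\mathbf{x})\|\leq M\|\mathbf{x}-\underline{\mathbf{x}}^*\|$ yields $(1-\alpha M)^2$, giving $\eta^2=\max\bigl((1-\alpha m)^2,(1-\alpha M)^2\bigr)$ with no differentiability beyond A3. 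Your closing sanity check that $\alpha<\frac{2}{\bar{n}L}$ forces $\eta<1$ (since $\mu\leq L$) is correct, though not needed for the inequality as stated.
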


\bibliographystyle{unsrt} 
\bibliography{references}

\end{document}